\documentclass[12pt]{article}

\usepackage{graphicx}

\setlength{\oddsidemargin}{0.25in}
\setlength{\textwidth}{6.0in}
\setlength{\topmargin}{0in}
\setlength{\textheight}{8.0in}

\newcommand{\blackslug}{\hbox{\hskip 1pt
        \vrule width 4pt height 8pt depth 1.5pt\hskip 1pt}}
\newcommand{\myQED}{\hfill \blackslug}

\newenvironment{proof}
    {\pagebreak[1]{\narrower\noindent {\bf Proof:\nopagebreak}}}%
    {\myQED}

    {\myQED}

\newtheorem{lemma}{Lemma}

\begin{document}

\begin{center}
{\large \bf
Exploring Viable Algorithmic Options For \\
Automatically Creating and Reconfiguring \\ 
Component-based Software Systems: \\
A Computational Complexity Approach \\
(Full Version)}

\vspace*{0.2in}

Todd Wareham \\
Department of Computer Science \\
Memorial University of Newfoundland \\
St.\ John's, NL Canada \\
(Email: {\tt harold@mun.ca}) \\
\vspace*{0.1in}

Marieke Sweers \\
Department of Artificial Intelligence and Psychology \\
Radboud Universiteit Nijmegen \\
Nijmegen, The Netherlands \\
(Email: {\tt marieke.sweers@gmail.com}) \\
\vspace*{0.1in}

\today
\end{center}

\begin{quote}
{\bf Abstract}:
Component-Based Development (CBD) is a popular approach to mitigating the costs 
of creating software systems. However, it is not clear to what extent the core 
component selection and adaptation activities of CBD can be implemented to operate 
automatically in an efficient and reliable manner or in what situations 
(if any) CBD is preferable to other approaches to software development. In this paper, 
we use computational complexity analysis to determine and compare the computational 
characteristics of fully automatic component-based software system creation and reconfiguration by 
{\em de novo} design, component selection, and component selection with adaptation. Our
results show that none of these approaches can be implemented to operate both 
efficiently and reliably in a fully automatic manner either in general or relative to a number of
restrictions on software systems, system requirements, components, and component 
adaptation. We also give restrictions under which all of these approaches can be 
implemented to operate both efficiently and reliably. As such, this paper
illustrates how different types of computational complexity analysis (in
particular, parameterized complexity analysis) can be used to systematically
explore the algorithmic options for implementing automatic activities in software
engineering.
\end{quote}

\section{Introduction}

\label{SectIntro}

Since its proposal by McIlroy in 1968 \cite{McI68}, Component-Based 
Development (CBD) 
has been a popular approach to mitigating the monetary and time costs of creating 
and maintaining software systems \cite{FK05,GMS02,HC01,SH04,VC+16}. In CBD, 
previously-developed software modules called components are connected together (possibly
with some adaptation of the component code and architecture) to generate new 
software systems from given requirements. In this process, components, software
systems, and the manners in which components can be used to create these system are specified 
using standardized models of components, component-based architectures, and component
composition. Over the last 20 years, a number of 
technologies have been implemented to assist human programmers in creating 
component-based systems, e.g., CORBA, CCM, EJB, COM+, and much effort has been put into
automating the key CBD activities of component selection and adaptation 
\cite{GMS02,Kel08,VC+16}.

There are two ongoing issues of great importance in the CBD community: (1) the desire 
to (in the case of ubiquitous and autonomic computing systems \cite{BB02,KC03}, fully) 
automate core CBD activities such as component selection and adaptation 
\cite{FK05,Kel08} and (2) the circumstances (if any) in which the cost of developing 
software systems by selecting and adapting components is less than that of designing 
systems from scratch, i.e., {\em de novo} \cite{GT05,MMM95}. The second issue is of 
particular importance because ``[e]conomic considerations, and cost/benefit analyses in
general, must be at the center of any discussion of software reuse'' 
\cite[page 557]{MMM95}. In addition to these general issues, there are also
computational conjectures about specific CBD activities based on empirical observations 
that need to verified, e.g., whether increases in the computational effort required
to adapt components is due to increases in the size of the code being modified and/or 
the number of code modifications increases \cite[Page 537]{MMM95}.

To address all of these issues and questions, it would be most useful to know the range
of possible methods for implementing efficient and reliable automated CBD as well as 
the circumstances (if any) in which these methods outperform other approaches to 
software development. This is best done using computational complexity analysis 
\cite{GJ79,DF13}. Such analyses determine whether or not there is an 
efficient algorithm for a given problem, i.e., whether that problem is tractable 
or intractable. So-called classical complexity analysis \cite{GJ79} establishes 
whether a problem can be solved efficiently in general, i.e., for all inputs. 
If this is not the case, parameterized complexity analysis \cite{DF13} establishes 
relative to which sets of input and output restrictions the problem can and cannot be solved efficiently.
Parameterized analyses thus allow the systematic exploration of viable algorithmic options
for solving an intractable problem \cite{War99}, cf.\ the piecemeal 
situation-by-situation manner in which algorithms for a problem are
typically derived.
In order to have the greatest possible applicability, both of these types of analyses are
typically performed relative to simplified versions of problems that are special cases of
the actual problems of interest. This is because, as an efficient algorithm for the 
general case can also efficiently solve special cases of that general case, intractability
results for special cases also imply intractability for all more general cases (see 
Section \ref{SectGenRes} for details). 

In this paper, we address the issues above by completing and extending initial
work reported (without almost all proofs of results) in \cite{WS15} 
(subsequently reprinted as \cite{WS16}).
We first use computational complexity analysis to show that neither 
creating nor reconfiguring software systems either by {\em de novo} design, component 
selection, or component selection with adaptation are tractable in 
general by either deterministic or probabilistic algorithms. We then use parameterized 
complexity analysis to give restrictions under which all of these activities are tractable and prove that surprisingly few restrictions on 
either the structure of software systems, system requirements, and components or the 
allowable degree of adaptation render these 
activities tractable. Taken together, these sets of results give the first objective
comparison of the effort involved in both creating and reconfiguring software systems
by {\em de novo} design, component selection, and component selection with adaptation. 
Our intent is to derive results that have the greatest possible applicability
to component-based software system development.
Hence, by the logic in the previous paragraph, our results are derived 
relative to simple models of 
software systems, system requirements, components, and adaptation (namely, purely
reactive two-level systems that satisfy specified functional requirements and whose 
whitebox procedure-components are adapted by changes to their code) which are special cases
of a broad range of more realistic models.

\subsection{Previous Work}

\label{SectPrevWork}

A number of problems in software engineering are known to be undecidable, i.e.,
do not have any algorithm that works correctly for all inputs. This is a
consequence of Rice's Theorem \cite[Section 9.3.3]{HMU01}.\footnote{
We thank an anonymous reviewer of an earlier version of this paper for
pointing this out.
}
One such result
is that there is no algorithm which, given a software system and a set of
requirements, can correctly decide whether or not that system satisfies
those requirements. That being said, it is also known that Rice's Theorem
only applies to the most general statements of such problems 
\cite[Section 9.3.4]{HMU01}, and that special cases of these problems (such as
those examined in this paper) may still have correct and even efficient
algorithms.

Computational complexity analyses have been done previously for component selection
\cite{PO99,PWM03} and component selection with adaptation \cite{BBR05}, with 
\cite{BBR05,PWM03} having the additional requirement that that number of components in
the resulting software system be minimized. Given the intractability of all of these
problems, subsequent work has focused on efficient approximation algorithms for
component selection. Though it has been shown that efficient algorithms that produce 
software systems whose number of components is within a constant multiplicative factor 
of optimal are not possible in general \cite{NH08}, efficient approximation algorithms 
are known for a number of special cases \cite{FBR04,HM+07,NH08}. 

All of these analyses use atomic blackbox formalizations of system requirements, 
components, and (in the case of \cite{BBR05}) component adaptation. While this is 
consistent with component selection as typically employed in both academia and industry 
\cite{AH+11,LB+08,MRE07}, none of these formalizations include specifications of 
the internal structure of software systems, system requirements, and components (and 
hence component adaptation as well) that are detailed enough to allow
investigation of restrictions on these aspects that could make component selection or 
component selection with adaptation tractable. Moreover, all work to date has
focused on component selection and adaptation in the context of system creation
and has not considered subsequent modification to a system over that system's lifetime.

Various metrics measuring the relative costs and hence benefits of developing software
systems by {\em de novo} design, component selection, and component selection with
adaption have been proposed (see \cite[Section 3]{MMM95} and references). 
Problems associated with subjectively estimating certain cost-variables within these 
metrics, e.g., adapting components and developing reusable components, led Mili et al 
in 1995 to conclude that ``\ldots we will guesstimate \ldots [the] relative 
effectiveness [of various software engineering methods] but we will not, {\em and 
cannot}, go any further \cite[page 540]{MMM95} (emphasis added). The situation has
not improved significantly in the last 20 years, as both Frakes and Kang in 2005
\cite[Section 4]{FK05} and Vale et al in 2016 \cite[Page 143]{VC+16} concluded
that the lack of metrics related to CBD was still a notable gap in CBD research.
This has been mitigated by the several hundred studies and experiments 
done over the last several decades to validate proposed CBD methods and evaluate their 
effectiveness in various situations \cite[Section 3.5]{VC+16}. However, to our 
knowledge, there has been no objective mathematically-based assessment to date of the 
relative computational effort required to create let alone reconfigure software systems 
by {\em de novo} design, component selection, and component selection with adaptation.

\subsection{Organization of Paper}

Our paper is organized as follows. In Section \ref{SectForm}, we present our models
of software system requirements, component-based architectures, components, 
component composition, and component adaptation and formalize the problems
of {\em de novo} and component-based software system creation and reconfiguration relative
to these models.  Section \ref{SectCIRes} demonstrates the intractability of all of 
these problems in general and Section \ref{SectPrmRes} establishes several basic sets 
of restrictions under which our problems are and are not tractable. In order to focus 
in the main text on the implications of our results for CBD, proofs of all results are 
given in an appendix. The generality and implications of our results are 
discussed in Sections \ref{SectGenRes} and \ref{SectDisc}, respectively. Finally, our 
conclusions and directions
for future work are given in Section \ref{SectConc}.

\section{Formalizing Component-based Software System \\ \hspace*{0.2in} Creation and Reconfiguration}

\label{SectForm}

\subsection{Issues addressed in our formalizations}

\label{SectFormIssues}

One of our goals in this paper is to assess whether or not component-based development 
has advantages over other types of software development relative to different activities
in the software life cycle. This assessment can be stated most simply along two 
dimensions:

\begin{enumerate}
\item{
\textit{Software design mode}: creating software using selected components versus 
creating software by adapting selected components versus creating software {\em de novo}
that is organized in a specified component-like fashion.
}
\item{
\textit{Software lifecycle activity}: Creating the initial version of
a software system relative to a set of requirements versus reconfiguring
an existing system to accommodate one or more changes to the requirements.\footnote{
Note that these creation and reconfiguration activities are invoked in
implementing the self-configuration and self-healing properties of autonomic
systems \cite{KC03} as well as the automatic synthesis of appropriate applications for
selected tasks in ubiquitous computing \cite{BB02}.
}
}
\end{enumerate}

\noindent
The possibilities implicit in these dimensions result in the following
six informal computational problems:

\vspace*{0.1in}

\noindent
{\sc Software Creation} \\
{\em Input}: Software requirements $R$, software-structure specification $X$. \\
{\em Output}: A software system $S$ whose structure is consistent with $X$
               and whose operation satisfies $R$.

\vspace*{0.1in}

\noindent
{\sc Software Creation from Components} \\
{\em Input}: Software requirements $R$, a set of software-component libraries
              $\cal{L}$ $= \{L_1, L_2,$ \linebreak $\ldots, L_{|C|}\}$. \\
{\em Output}: A software system $S$ whose structure consists of 
               components drawn from the libraries in $\cal{L}$ and 
               whose operation satisfies $R$.

\vspace*{0.1in}

\noindent
{\sc Software Creation from Adapted Components} \\
{\em Input}: Software requirements $R$, a set of software-component libraries
              $\cal{L}$ $= \{L_1, L_2,$ \linebreak $\ldots, L_{|C|}\}$. \\
{\em Output}: A software system $S$ whose structure consists of 
               components derived  from components drawn from the libraries in 
               $\cal{L}$ and whose operation satisfies $R$.

\vspace*{0.1in}

\noindent
{\sc Software Reconfiguration} \\
{\em Input}: A software system $S$ whose structure is consistent with 
              specification $X$ and whose operation satisfies requirements $R$,
              new software requirements $R_{new}$. \\
{\em Output}: A software system $S'$ derived from $S$ whose structure is 
               consistent with $X$ and whose operation satisfies $R \cup R_{new}$.

\vspace*{0.1in}

\noindent
{\sc Software Reconfiguration from Components} \\
{\em Input}: A software system $S$ whose structure consists of 
              components from a set of software-component libraries
              $\cal{L}$ $= \{L_1, L_2, \ldots, L_{|C|}\}$ and whose operation
              satisfies requirements $R$, new software requirements $R_{new}$. \\
{\em Output}: A software system $S'$ derived from $S$ whose structure consists
               of components drawn from the libraries in $\cal{L}$ 
               and whose operation satisfies $R \cup R_{new}$.

\vspace*{0.1in}

\noindent
{\sc Software Reconfiguration from Adapted Components} \\
{\em Input}: A software system $S$ whose structure consists of 
              components from a set of software-component libraries
              $\cal{L}$ $= \{L_1, L_2, \ldots, L_{|C|}\}$ and whose operation
              satisfies requirements $R$, new software requirements $R_{new}$. \\
{\em Output}: A software system $S'$ derived from $S$ whose structure consists
               of components derived from components drawn from the 
               libraries in $\cal{L}$ and whose operation satisfies $R \cup R_{new}$.

\vspace*{0.1in}

\noindent
We want to be able to assess the effects of as many characteristics of the software 
system specifications, requirements, and structure as possible on the computational 
difficulty of the creation and reconfiguration processes. This requires that explicit 
and detailed representations of these entities be included in our problems. This is very
different from the more abstract atomic blackbox perspective adopted by previous 
computational complexity analyses of CBD when defining their problems (see Section 
\ref{SectPrevWork}). However, as we will see later in this paper, this will allow our 
complexity analyses (in particular, our parameterized complexity analyses) to build on 
and extend the results derived in those previous complexity analyses.

\subsection{Formalizations of problem entities}

\label{SectFormEnt}

To assess the general computational difficulty of and (if necessary) explore algorithmic
options under restrictions for efficiently solving the problems sketched in Section 
\ref{SectFormIssues}, we need formalizations of all of the entities comprising these
problems. There is a vast existing literature on various models of software system 
requirements, component-based architectures, components, component composition, 
component libraries, and component adaptation
\cite{GMS02,HC01,Kel08}. We want to avoid introducing spurious computational difficulty
due to powerful formalisms, as this would complicate the interpretation of our results;
we also want our results to have the greatest possible applicability in the sense 
described in Sections \ref{SectIntro} and \ref{SectGenRes}. Hence, we shall choose the following basic 
formalizations:\footnote{
These formalizations are based on those developed in \cite{OS+15} and \cite{Swe15} (see also
\cite{OB+18}) to analyze the computational complexity of the adaptive toolbox theory of
human decision-making \cite{GT99}. 
}

\begin{itemize}
\item{
\textit{Software system requirements}: The requirements 
will be a set $R = \{r_1, r_2, \ldots r_{|R|}\}$ of situation-action pairs 
where each pair $r_j = (s_j,a_j)$ consists of a situation $s_j$ defined by a 
particular set of truth-values $s_j = \langle v(i_1), v(i_2), \ldots$
$v(i_{|I|})\rangle$, $v(i_k) \in \{True, False\}$, relative to each of the Boolean 
variables $i_k$ in set $I = \{i_1, i_2, \ldots, i_{|I|}\}$ and an action $a_j$ from
set $A = \{a_1, a_2, \ldots a_{|A|}\}$. As such, these are
functional requirements describing wanted system behaviors.
}
\item{
\textit{Component-based architecture model}: We will here consider a two-level software system architecture 
consisting of a top-level multiple 
\texttt{IF-THEN-ELSE} statement block (a {\bf selector}) whose branches are in turn 
lower-level \linebreak \texttt{IF-THEN-ELSE} statement blocks ({\bf procedures}) whose branches 
execute actions from $A$. Each selector and procedure \texttt{IF-THEN}
condition is a Boolean formula that is either a variable from $I$ 
e.g., $i_j$, or its negation, e.g., $\neg i_j$. 
Each selector and procedure with one or more situation-variable conditions
terminates in a final \texttt{ELSE} statement. 
In addition, the selector may consist of single special
statement \texttt{IF * THEN} which evaluates to $True$ in all cases
(the default selector) and a procedure may consist of a single executed
action; such selectors and procedures have no associated conditions.
}
\item{
\textit{Component, component composition, and component library models}: Two-level
software systems of the form above consist of two
types of components, selectors and procedures. 
The code for each component is available, i.e., they are
whitebox components, and the behavior of a component is specified
by its code. 
Components are composed into software systems using only simple procedure calls that 
do not either have input parameters or return values at procedure termination,
These components are
stored in libraries $L_{sel}$ and $L_{prc}$, respectively.
}
\item{
\textit{Software structure specification}: The basic characteristics
of two-level software systems of the form above are
the situation-variable and action sets on which they are based ($I$ and $A$),
the maximum number of conditions allowed in the selector
($|sel|$) and the maximum number of conditions allowed in
any procedure ($|prc|$).
Note that the maximum possible size of such a system (described
in terms of the number of lines of code in the system) is captured by
the expression $(|sel| + 1)(|prc| + 2)$.
}
\item{
\textit{Software system adaptation}: 
We will here consider two types of adaptation: (1) changes to the system code: 
changing the condition in or action executed by any \texttt{IF-THEN-ELSE} statement 
(component adaptation) and (2) changes to the system component-set: using a 
different selector-component from $L_{sel}$ or changing any used procedure-component 
to another from $L_{prc}$ (system adaptation). Note in the case of a selector-component
change of type (2), both the original and new selector must have the same number of 
\texttt{IF-THEN} statements and the procedures called by the original selector are 
mapped to the corresponding positions in the new selector. As such, these adaptations 
are invasive and at the code level (though, in a limited sense, adaptations of type (2)
can be seen as operating at the architecture level).
}
\end{itemize}

\begin{figure}[p]
\small
\centering
\begin{tabular}{| c || c | c | c | c | c || c |}
\hline
req.\ & $i_1$ & $i_2$ & $i_3$ & $i_4$ & $i_5$ & action \\
\hline\hline
$r_1$ & T & T & T & T & T & $a_2$ \\
\hline
$r_2$ & T & F & F & F & T & $a_1$ \\
\hline
$r_3$ & F & F & F & F & F & $a_2$ \\
\hline
$r_4$ & F & F & F & F & F & $a_2$ \\
\hline
$r_5$ & T & T & T & F & T & $a_3$ \\
\hline
\end{tabular} \\

{\tt
\begin{tabular}{l l}
 & \\
\multicolumn{2}{c}{(a)} \\
 & \\
procedure p1:                   & procedure p2 \\
~ if $i_4$ then $a_2$           & ~ if $\neg i_2$ then $a_1$ \\
~ elsif $\neg i_3$ then $a_1$ & ~ elsif $\neg i_4$ then $a_2$ \\
~ elsif $i_5$ then $a_3$      & ~ els$a_3$ \\
~ else $a_1$                    & \\
 & \\
procedure p3:                   & procedure p4: \\
~ if $i_4$ then $a_2$           & ~ $a_2$ \\
~ else $a_2$                    & \\
 & \\
\multicolumn{2}{c}{(b)} \\
 & \\
selector s1:                    & selector s2: \\
~ if $i_1$ then ???             & ~ if * then ??? \\
~ elsif $i_5$ then ???        & \\
~ else ???                      & \\
 & \\
\multicolumn{2}{c}{(c)} \\
 & \\
system S1:                    & system S2: \\
~ if $i_1$ then call p1         & ~ if * then call p2 \\
~ elsif $i_5$ then call p3    & \\
~ else call p4                  & \\
 & \\
\multicolumn{2}{c}{(d)} \\
 & \\
\end{tabular}
}
%\end{center}
\caption{ Example Requirements, Procedures, Selectors, and Software Systems. 
           (a) Software requirements $R = \{r_1, r_2, r_3, r_4, r_5\}$ defined on
           situation-variables $I = \{i_1, i_2, i_3, i_4, i_5\}$ and action-set 
           $A = \{a_1, a_2, a_3\}$. (b) Four procedures. (c) Two selectors as they would
           appear in a selector-component library with blank procedure calls, where $s2$
           is the default selector. (d) Two software systems created by instantiating 
           the procedure-calls in the selectors from part (c) with procedures from part (b).
}
\label{FigCSS}
\end{figure}

\noindent
Examples of the first four of these entities are given in Fig. \ref{FigCSS}.
The above allows us to formalize various actions and properties in the
problems given in Section \ref{SectFormIssues} as follows:

\begin{itemize}
\item A software system $S$ is consistent with a software structure-specification
       $X = \langle I, A, |sel|, |prc|\rangle$ if it has the two-level structure 
       described above where \texttt{IF-THEN} conditions are members of $I$, all 
       procedure-executed actions are drawn from $A$, the number of conditions in the 
       selector is at most $|sel|$, and the number of conditions in each procedure is 
       at most $|prc|$. Note that the default selector has $|sel| = 0$ (e.g., selector 
       {\tt s4} in Fig. \ref{FigCSS}(c)) and a procedure consisting of a single action
       has $|prc| = 0$ (e.g., procedure {\tt p4} in Fig. \ref{FigCSS}(b)). For 
       example, both software-systems in Fig. \ref{FigCSS}(d) are consistent with $X =
       (I = \{i_1, i_2, i_3, i_4, i_5\}, A = \{a_1, a_2, a_3\}, |sel| = 2, |prc| = 3)$. 
\item A software system $S$ is constructed from components drawn from 
       $\cal{L}$ $= \{L_{sel}, L_{prc}\}$ if the selector-component is from 
       $L_{sel}$ and each procedure-component is from $L_{prc}$. Note that a 
       member of $L_{prc}$ may appear zero, one, or more times in $S$. For
       example, both software systems in Fig. \ref{FigCSS}(d) are
       constructed using components drawn from $\cal{L}$ $= 
       \{L_{sel} = \{{\tt s1, s2}\}, L_{prc} = \{{\tt p1, p2, p3, p4}\}\}$.
\item A software system $S$ is derived from software system $S'$ relative
       to $\cal{L}$ $= \{L_{sel}, L_{prc}\}$ if there is a sequence of component
       and system adaptations relative to $\cal{L}$ that transforms $S$ into $S'$.
\item The operation of a software system $S$ satisfies requirements $R$ if
       for each situation-action pair $(s,a)$ in $R$, the execution of $S$
       relative to the truth-settings in $s$ results in the execution of $a$.
       For example, software system {\tt S1} in Fig. \ref{FigCSS}(d)
       satisfies the requirements in Fig. \ref{FigCSS}(a) but 
       software system {\tt S2} does not (because it produces different
       outputs ($a_3$, $a_1$, $a_1$, and $a_2$ respectively) for requirements $r_1$,
       $r_3$, $r_4$, and $r_5$).
\end{itemize}

\noindent
One's initial reaction on contemplating the above is that it is all way too simple ---
the radically restricted types of components and component composition considered
here can produce only the most basic memoryless reactive software systems, whose 
computational power is far less than that required in many real-world software systems. 
However, by the logic described previously in Section \ref{SectIntro} and discussed at
greater length in Section \ref{SectGenRes}, it is precisely the use of such simplified
models in the problems we analyze that will allow many of our derived results to apply
to a broad range of more realistic models of component-based software system development.

%\vspace*{-0.08in}

\subsection{Formalizations of computational problems}

\label{SectFormProb}

We can now formalize the problems sketched in Section \ref{SectFormIssues} as follows:

\vspace*{0.1in}

\noindent
{\sc Software Creation} ~ (SCre-Spec) \\
{\em Input}: Software requirements $R$ based on sets $I$ and $A$, a software-structure 
              specification $X = \langle I, A, |sel|, |prc|\rangle$. \\
{\em Output}: A software system $S$ whose structure is consistent with $X$
               and whose operation satisfies $R$, if such an $S$ exists, 
               and special symbol $\bot$ otherwise.

\vspace*{0.1in}

\noindent
{\sc Software Creation from Components}  ~ (SCre-Comp) \\
{\em Input}: Software requirements $R$ based on sets $I$ and $A$, a set of 
              software-component libraries $\cal{L}$ $= \{L_{sel}, L_{prc}\}$,
              a positive integer $d \geq 0$. \\
{\em Output}: A software system $S$ whose structure consists of 
               components based on at most $d$ types of components drawn from the 
               libraries in $\cal{L}$ and whose operation satisfies $R$, 
               if such an $S$ exists, and special symbol $\bot$ otherwise.

\vspace*{0.1in}

\noindent
{\sc Software Creation from Adapted Components} ~ (SCre-CompA) \\
{\em Input}: Software requirements $R$ based on sets $I$ and $A$, a set of 
              software-component libraries $\cal{L}$ $= \{L_{sel}, L_{prc}\}$,
              positive integers $d, c_c \geq 0$. \\
{\em Output}: A software system $S$ whose structure consists of 
               components derived by at most $c_c$ changes to at most $d$ 
               types of components drawn from the libraries in $\cal{L}$ and 
               whose operation satisfies $R$, if such an $S$ exists, and special
               symbol $\bot$ otherwise.

\vspace*{0.1in}

\noindent
{\sc Software Reconfiguration} ~ (SRec-Spec) \\
{\em Input}: A software system $S$ whose operation satisfies requirements $R$ and whose
              structure is consistent with specification $X = \langle I, A, |sel|, 
              |prc|\rangle$ , a set $R_{new}$ of new situation-action pairs
              based on $I$ and $A$, a positive integer $c_c \geq 0$. \\
{\em Output}: A software system $S'$ derived by at most $c_c$ code changes to $S$ 
               whose structure is consistent with $X$ and whose operation 
               satisfies $R \cup R_{new}$, if such an $S'$ exists, and special symbol
               $\bot$ otherwise.

\vspace*{0.1in}

\noindent
{\sc Software Reconfiguration from Components} ~ (SRec-Comp) \\
{\em Input}: A software system $S$ whose structure consists of 
              components from a set of software-component libraries
              $\cal{L}$ $= \{L_{sel}, L_{prc}\}$ and whose operation
              satisfies requirements $R$, a set $R_{new}$ of new situation-action pairs
              based on $I$ and $A$, positive integers 
              $c_l, d \geq 0$. \\
{\em Output}: A software system $S'$ derived from $S$ by at most $c_l$ 
               component changes relative to the libraries in $\cal{L}$
               whose structure consists of at most $d$ types of 
               components and whose operation satisfies $R \cup R_{new}$,
               if such an $S'$ exists, and special symbol $\bot$ otherwise.

\vspace*{0.1in}

\noindent
{\sc Software Reconfiguration from Adapted Components} ~ (SRec-CompA) \\
{\em Input}: A software system $S$ whose structure consists of 
              components from a set of software-component libraries
              $\cal{L}$ $= \{L_{sel}, L_{prc}\}$ and whose operation
              satisfies requirements $R$, a set $R_{new}$ of new situation-action pairs
              based on $I$ and $A$, positive integers 
              $c_l, c_c, d \geq 0$. \\
{\em Output}: A software system $S'$ derived from $S$ by at most $c_l$ 
               component changes 
               relative to the libraries in $\cal{L}$ and $c_c$ code changes 
               whose structure consists of at most $d$ types of 
               components and whose operation satisfies 
               $R \cup R_{new}$, if such an $S'$ exists, and special symbol $\bot$ 
               otherwise.

\vspace*{0.1in}

\noindent
We have included parameter $d$ in all component-based problems to allow control 
over and hence investigation of the computational effects of restricting the number of 
component retrievals from given libraries. In the case of problems SCre-Spec and
SRec-Spec where $d$ is not explicitly included, $d \leq |sel| + 2$ (as the number of
types of components in software systems created in these problems is bounded by the
number of available procedure-call slots in the selector). The inclusion of parameters 
$c_c$ and $c_l$ in the reconfiguration problems is justified analogously.

It is important to note the following three ways in which
these problems abstract away from practical aspects of CBD: 

\begin{enumerate}
\item the costs of searching for and accessing any component used to build a software 
       system is constant (as all components are in $L_{sel}$ and $L_{prc}$);
\item determining whether a set of components can be integrated 
       together into a running system (verification \cite[page 552]{MMM95})
       is trivial (as any selector from $L_{sel}$  with $k$ procedure-slots is 
       compatible with any selection of $k$ procedures from $L_{prc}$); and
\item determining whether a software system satisfies all of a given set of 
       requirements (validation \cite[page 552]{MMM95}) can be done in low-order 
       polynomial time (as each requirement in $R$ can be run checked against
       a two-level system $S$ in time proportional to the number of lines of
       code in the system, i.e., $(|sel| + 1)(|prc| + 2)$,
       to see if the requested action is produced).
\end{enumerate}

\begin{table*}[t]
\caption{Parameters for Component-based Software System Creation and Reconfiguration Problems. All 
          values in the fourth and fifth columns of the table are given for the first 
          and second solutions (namely, software systems s1 and s2) to the 
          creation problem posed at the end of Section \ref{SectFormEnt} relative
          to the requirements in part (a) of Fig. \ref{FigCSS}.
          }
\label{TabPrm}
\vspace*{0.1in}
\centering
\begin{tabular}{ | c ||  l |  c | c |  c |}
\hline
            &             &               & \multicolumn{2}{c}{Values} \\
Parameter   & Description & Applicability & S1 & S2 \\
\hline
\hline
$|I|$       & \# situation-variables & All & 5 & 5 \\
\hline
$|A|$       & \# possible actions & All & 3 & 3 \\
\hline
$|sel|$     & Max \# selector-conditions & All & 2 & 0 \\
\hline
$|prc|$     & Max \# procedure-conditions & All & 3 & 2 \\
\hline
$|S|$       & Max size of software system    & All & 15 & 4 \\
            & ($= (|sel| + 1)(|prc| + 2))$ &     &   &   \\
\hline
$d$     & Max \# component-types & All      & 4 & 2 \\
        & ~~ in software system  &        & &  \\
\hline\hline
$|L_{sel}|$ & \# selector-components & *-CompA, & 2 & 2 \\
            & ~~ in selector-library & *-Comp & & \\
\hline
$|L_{prc}|$ & \# procedure-components & *-CompA, & 4 & 4 \\
            & ~~ in procedure-library & *-Comp & & \\
\hline\hline
$|R_{new}|$   & \# new requirements & SRec-* & N/A & N/A \\
\hline
$c_c$         & Max \# code changes & *-CompA, & N/A & N/A \\
              & ~~ allowed         & SRec-Spec & & \\
\hline
$c_l$         & Max \# component changes & SRec-CompA, & N/A & N/A \\
              & ~~ allowed              & SRec-Comp & & \\
\hline
\end{tabular}
\end{table*}

\noindent
This is not to say that search, access, verification, and validation are unimportant; indeed,
all four of these activities are very difficult for all but the simplest software
systems and the subject of vigorous ongoing research \cite{GMS02,HC01,MMM95,VC+16}. 
Rather, we choose to abstract away from them so that our analyses can focus on the 
computational difficulties associated with the core activities in CBD, namely, the 
selection and adaptation of components.

\section{Component-based Software System Creation and \\ \hspace*{0.20in} 
          Reconfiguration are Intractable}

\label{SectCIRes}

Let us revisit the first of the questions raised in the Introduction --- namely,
are there efficient algorithms for component-based software system creation and 
reconfiguration that
are reliable, i.e., these algorithms operate correctly for all inputs? Following
general practice in Computer Science \cite{GJ79}, we shall say that an algorithm is 
efficient if that algorithm solves its associated problem in polynomial time --- that is,
the algorithm runs in time that is upper-bounded by $c_1n^{c_2}$ where $n$ is the input
size and $c_1$ and $c_2$ are constants. A problem which has a 
polynomial-time algorithm is {\em (polynomial-time) tractable}. Such algorithms are 
preferable to those whose runtimes are bounded by superpolynomial functions, e.g., 
$n^{\log n}$, $2^n$, $2^{2^n}$. This is so because polynomial functions increase in
value much slower than non-polynomial functions as $n$ gets large, which allows 
polynomial-time algorithms to solve much larger inputs in practical amounts of
time than superpolynomial-time algorithms.

Desirable as polynomial-time algorithms are, they do not exist
for any of our problems.\footnote{
All polynomial-time intractability results in this section hold relative to the
$P \neq NP$ conjecture, which is widely believed to be true \cite{For09,GJ79}.
}

\vspace*{0.15in}

\noindent
{\bf Result A}: SCre-Spec, SCre-Comp, SCre-CompA, SRec-Spec, SRec-Comp,
                 and SRec-CompA are not polynomial-time tractable.

\vspace*{0.15in}

\noindent
This shows that even the basic versions of component-based creation and reconfiguration
considered here are not solvable in polynomial time in general. A frequently-adopted 
response to this intractability is to relax reliability and consider probabilistic polynomial-time algorithms
which generate solutions that are of acceptable quality a very high proportion
of the time, e.g., genetic or simulated annealing algorithms \cite{HMZ12,LH+15,VGP08}. 
Unfortunately, this is not a viable option either.\footnote{
This result holds relative to both the $P \neq NP$ conjecture mentioned in Footnote 2 
and the $P = BPP$ conjecture, the latter of which is also widely believed to be 
true \cite{CRT98,Wig07}.
}

\vspace*{0.15in}

\noindent
{\bf Result B}: SCre-Spec, SCre-Comp, SCre-CompA, SRec-Spec, SRec-Comp,
                 and SRec-CompA are not polynomial-time tractable
                 by probabilistic algorithms which operate correctly with
                 probability $\geq 2/3$.

\vspace*{0.15in}

\noindent
In the next section, we will consider what may be a more useful approach --- namely,
algorithms that are reliable and run in what is effectively polynomial time
on restricted inputs encountered in practice.

\section{What Makes Component-based Software \\ \hspace*{0.2in} System Creation and 
          Reconfiguration \\ \hspace*{0,2in} Tractable?}

\label{SectPrmRes}

To answer the question of what restrictions make component-based software
system creation and 
reconfiguration tractable, we first need to define what it means to solve a 
problem efficiently under restrictions.  Let such restrictions be phrased in
terms of the values of aspects of our problem input or output; call each such aspect a 
\emph{parameter}. An overview of the parameters that we consider here is given in 
Table \ref{TabPrm}. These parameters can be divided into three groups:

\begin{enumerate}
\item Restrictions on software system, system requirement, and component structure 
       ($|I|$, $|A|$, $|sel|$, $|prc|$, $|S|$, $d$);
\item Restrictions on component library structure 
       ($|L_{sel}|$, $|L_{prc}|$);  and
\item restrictions on component adaptation ($|R_{new}|, c_c$, $c_l$).
\end{enumerate}

\noindent
Note that parameters $|S|$ and $c_c$ relative to problem SCre-CompA are of use in 
addressing the conjectures mentioned previously in Section \ref{SectIntro} about the 
effects of the extent and number of code modifications on the cost of adapting 
components.

The most popular conception of efficient solvability under restrictions 
phrased in terms of parameters is fixed-parameter tractability \cite{DF13}.
A problem $\Pi$ is {\em fixed-parameter (fp-) tractable 
relative to a set of parameters $K = \{k_1, k_2, \ldots, k_m\}$},
i.e., {\em $\langle K \rangle$-$\Pi$ is fp-tractable}, if there is an algorithm 
for $\Pi$ whose running time is upper-bounded by $f(K)n^c$ for some function
$f()$ where $n$ is the problem input size and $c$ is a constant. 
Note that fixed-parameter tractability generalizes polynomial-time solvability by 
allowing the leading constant $c_1$ in the runtime upper-bound of an algorithm to 
be a function of $K$ rather than a constant. 
This allow problems to be effectively solvable in 
polynomial time when the values of the parameters in $K$ are small and $f()$
is well-behaved, e.g., $1.2^{k_1 + k_2}$, such that
the value of $f(K)$ is a small constant. Hence, if a polynomial-time 
intractable problem $\Pi$ is fixed-parameter tractable for a parameter-set 
$K$ when the values of the parameters in $K$ are small and $f()$ is 
well-behaved then $\Pi$ can be efficiently solved even for large inputs.

Our questions about efficient solvability of component-based software creation and 
reconfiguration problems under restrictions may now be rephrased in terms of
what sets of parameters do and do not make these problems fp-tractable. 
%the analyses in these subsections will focus on 
%characterizing
%the complexities of the largest possible number of parameter-combinations relative to 
%the parameters in Table \ref{TabPrm} rather than deriving the best possible algorithms 
%for parameter-sets that yield tractability.
We consider first what parameters do not yield fp-tractability.\footnote{
Each fp-intractability results in this section holds relative to the conjecture 
$FPT \neq W[1]$, which is widely believed to be true \cite{DF13}.
}

\vspace*{0.15in}

\noindent
{\bf Result C}: $\langle |A|, |sel|, |prc|, |S|\rangle$-SCre-Spec is fp-intractable.

\vspace*{0.15in}

\noindent
{\bf Result D}: $\langle |A|, |prc|, d, |L_{sel}|\rangle$-SCre-Comp is fp-intractable .

\vspace*{0.15in}

\noindent
{\bf Result E}: $\langle |A|, |sel|, |prc|, |S|, d, |L_{sel}|, |L_{prc}|, 
                 c_c \rangle$-SCre-CompA is fp-intractable.

\vspace*{0.15in}

\noindent
{\bf Result F}: $\langle |A|, |sel|, |prc|, |S|, |R_{new}|, c_c\rangle$-SRec-Spec is 
                 fp-intractable.

\vspace*{0.15in}

\noindent
{\bf Result G}: $\langle |A|, |prc|, d, |L_{sel}|\rangle$-SRec-Comp is fp-intractable.

\vspace*{0.15in}

\noindent
{\bf Result H}: $\langle |A|, |sel|, |prc|, |S|, d, |L_{sel}, |L_{prc}|, |R_{new}|, c_l, 
                 c_c \rangle$-SRec-CompA is \newline fp-intractable.

\vspace*{0.15in}

\noindent
These results are more powerful than they first appear, as a problem that is 
fp-intractable relative to a particular parameter-set $K$ is also fp-intractable 
relative to any subset of $K$ \cite[Lemma 2.1.31]{War99}. Hence, there are in fact a 
number of combinations of parameters which cannot be restricted to yield 
fp-tractability for our problems. Courtesy of Result E, this includes parameters $|S|$ 
and $c_c$ relative to problem SCre-CompA.

That being said, there are restrictions that do make our problems fp-tractable. 

\vspace*{0.15in}

\noindent
{\bf Result I}: $\langle I\rangle$-SCre-Spec, -SCre-Comp, -SCre-CompA, -SRec-Spec, 
                    -SRec-Comp, and -SRec-CompA are fp-tractable.

\vspace*{0.15in}

\noindent
{\bf Result J}: $\langle |sel|, |L_{prc}|\rangle$-SCre-Comp and -SRec-Comp are 
                    fp-tractable.

\vspace*{0.15in}

\noindent
Again, these results are more powerful than they first appear, as 
a problem that is fp-tractable relative to a particular parameter-set $K$ 
is also fp-tractable relative to any superset of $K$ \cite[Lemma 2.1.30]{War99}.
Hence, any set of parameters including $|I|$ can be restricted to yield fp-tractability
for all of our problems and any set of parameters including both $|sel|$ and
$|L_{prc}|$ can be restricted to yield fp-tractability for SCre-Comp and SRec-Comp.

A detailed summary of our fixed-parameter intractability and tractability results is 
given in Table \ref{TabFPRes}. Note that our fp-intractability results hold when many 
of the parameters in these results are restricted to small constant values. Moreover, 
courtesy of the parameter-set subset/superset relations noted above, Results C, E, F, 
H, and I completely characterize the parameterized complexity of problems SCre-Spec, 
SCre-CompA, SRec-Spec, and SRec-CompA relative to the parameters in Table \ref{TabPrm}. 

\begin{table}[p]
\caption{A Detailed Summary of Our Fixed-Parameter Complexity Results. Each column 
          in this table is a result for a particular problem which holds relative to
          the parameter-set consisting of all parameters with a $@$-symbol in that column.
	  If in addition  a result holds when a parameter has a constant value $c$, that is
	  indicated by $c$ replacing $@$.
          Results are grouped by problem, with fp-intractability results
          first and fp-tractability results (shown in bold) last.}
\label{TabFPRes}
\begin{center}

\vspace*{0.15in}

%\begin{tabular}{ l  l  l  l  l  l  l  l }
\begin{tabular}{ | l ||  l  l | l  l  l |  l  l | }
\hline
      & \multicolumn{7}{c|}{Creation} \\
\cline{2-8}
      & \multicolumn{2}{l|}{Spec} 
      & \multicolumn{3}{l|}{Comp} 
      & \multicolumn{2}{l|}{CompA} \\
      & C & {\bf I}
      & D & {\bf I} & {\bf J}
      & E & {\bf I} \\
\hline\hline
$|I|$       & -- & {\bf @}
            & -- & {\bf @} & --
            & -- & {\bf @} \\
\hline
$|A|$       & 2 & --
            & 2 & -- & --
            & 2 & -- \\
\hline
$|sel|$     & 0 & --
            & -- & -- & {\bf @}
            & 0 & -- \\
\hline
$|prc|$     & @ & --
            & 1 & -- & --
            & @ & -- \\
\hline
$|S|$     & @ & --
          & -- & -- & --
          & @ & -- \\
\hline
$d$         & 2 & --
            & @ & -- & --
            & 2 & -- \\
\hline\hline
$|L_{sel}|$ & N/A & N/A
            & 1 & -- & --
            & 1 & -- \\
\hline
$|L_{prc}|$ & N/A & N/A
            & -- & -- & {\bf @}
            & 1 & -- \\
\hline\hline
$|R_{new}|$ & N/A & N/A
            & N/A & N/A & N/A
            & N/A & N/A \\
\hline
$c_c$       & N/A & N/A
            & N/A & N/A & N/A
            & @ & -- \\
\hline
$c_l$       & N/A & N/A
            & N/A & N/A & N/A
            & N/A & N/A \\
\hline
\end{tabular}

\vspace*{0.25in}

\begin{tabular}{ | l ||  l  l | l  l  l | l  l | }
\hline
      & \multicolumn{7}{c|}{Reconfiguration} \\
\cline{2-8}
      & \multicolumn{2}{l|}{Spec} 
      & \multicolumn{3}{l|}{Comp} 
      & \multicolumn{2}{l|}{CompA} \\
      & F & {\bf I}
      & G & {\bf I} & {\bf J}
      & H & {\bf I} \\
\hline\hline
$|I|$       & -- & {\bf @}
            & -- & {\bf @} & --
            & -- & {\bf @} \\
\hline
$|A|$       & 3 & --
            & 2 & -- & --
            & 3 & -- \\
\hline
$|sel|$     & 0 & --
            & -- & -- & {\bf @}
            & 0 & -- \\
\hline
$|prc|$     & @ & --
            & 1 & -- & --
            & @ & -- \\
\hline
$|S|$       & @ & --
            & -- & -- & --
            & @ & -- \\
\hline
$d$         & 2 & --
            & @ & -- & --
            & 2 & -- \\
\hline\hline
$|L_{sel}|$ & N/A & N/A
            & 1 & -- & --
            & 1 & -- \\
\hline
$|L_{prc}|$ & N/A & N/A
            & -- & -- & {\bf @}
            & 1 & -- \\
\hline\hline
$|R_{new}|$ & 1 & --
            & -- & -- & --
            & 1 & -- \\
\hline
$c_c$       & @ & --
            & N/A & N/A & N/A
            & @ & -- \\
\hline
$c_l$       & N/A & N/A
            & -- & -- & --
            & 0 & -- \\
\hline
\end{tabular}
\end{center}
\end{table}

\section{Generality of Results}

\label{SectGenRes}

All of our intractability results (namely, Results A--H), though defined 
relative to basic models of software systems, system requirements, components, 
component libraries, and component adaptation, have a broad applicability. This
is so because, as noted in Section 1, 
the models for which these results hold are special cases of more realistic 
models, e.g.,

\begin{itemize}
\item purely functional software requirements that explicitly list all situations to 
       which software should respond are a special case of more complex requirements
       based on compact specifications of software behaviour such as finite-state 
       automata, statecharts, or statements in natural language which in addition 
       incorporate other functional and non-functional properties of software systems 
       such as degree of reliability and response time;
\item two-level selector / procedure software systems that act as
       simple functions are a special case of more complex 
       persistent software systems 
       whose components invoke each other in more complex 
       manners;
\item components consisting of a single condition-statement block 
       procedure without input parameters or return values and which have
       no dependencies on other components are a special case of more 
       complex components consisting of arbitrary procedure code and/or data types 
       which have complex dependencies on other components such as
       data-type sharing or inheritance;
\item component composition by calls to procedures without input parameters or return 
       values and which have no dependencies on other components is a special case of 
       more complex types of component composition invoking complex dependencies on other components such as
       data-type sharing or inheritance;
\item component libraries that are simply lists of components are
       a special case of component libraries that incorporate 
       component behaviour-specifications and / or metadata
       to aid in selection and adaptation; and
\item whitebox component adaptation involving single-line single-symbol code changes 
       is a special case of whitebox component adaptation involving more complex
       types of code change.
\end{itemize}

\noindent
More realistic versions of the six component-based software system creation and 
reconfiguration problems analyzed in this paper can be created by replacing any
combination of simple special-case models with the more realistic models above. For 
example, one could define a version of SCre-Comp in which software requirements are
given as natural language statements and components consist of arbitrary internal
code composed with each other by complex interfaces such as those allowed in
popular component models like CORBA. Courtesy of the special-case relationship, 
any automated system that solves such a more 
realistic problem $\Pi'$ can also solve the problem $\Pi$ defined relative
to the simple special-case models analyzed here. Depending on the nature
of the special-case relationship, this will sometimes involve recoding problem 
entities, e.g.,  translating a software requirement $((i_1 = True, i_2 = False, i_3 =
True), a_4)$ in $R$ into a natural language statement ``If $i_1$ is $True$ and $i_2$ 
is $False$ and $i_3$ is $True$ then execute action $a_4$''. Intractability results for
$\Pi$ must then also apply to $\Pi'$ as well as the operation of any automated system 
solving $\Pi'$. To see this, suppose $\Pi$ is intractable; if $\Pi'$ is tractable 
by algorithm $A$, then $A$ can be used to solve $\Pi$ efficiently, which contradicts 
the intractability of $\Pi$. Hence, $\Pi'$ must also be intractable.

Our various tractability results under restrictions (namely, Results I and J) are
much more fragile. This is so because what may seem like innocuous changes to the 
problems analyzed here can violate assumptions critical to the correct operation of 
the algorithms underlying these results, which means that algorithms for solving
simplified problems may not work for more realistic problems. Hence, our tractability
results may only apply to certain more realistic problems, and this needs
to be assessed on a case-by-case basis. 

\section{Discussion}

\label{SectDisc}

We have found that it is very unlikely that creating and reconfiguring software systems
by {\em de novo} design, component selection, and component selection with
adaptation are polynomial-time solvable for even the simple component-based
software systems considered here by either deterministic (Result A)
or probabilistic (Result B) algorithms. This answers the long-standing question of which
of {\em de novo} design, component selection, or component selection with
adaptation is best for creating software systems --- computationally speaking,
all three methods are equally good (or bad) in general. Results A 
and B also constitute the first proof that {\em no} currently-used method (including 
search-based methods based on evolutionary algorithms (see \cite[Section 5]{HMZ12} and 
references)) can guarantee both efficient and (even high-probability) correct operation 
for all inputs for these problems 

As described in Section \ref{SectPrmRes}, efficient correctness-guaranteed methods
may yet exist for these problems relative to plausible restrictions. It seems reasonable
to conjecture that restrictions relative to the parameters listed in Table \ref{TabPrm}
should render our problems tractable. We have shown that many of these restrictions 
(either individually or in combination) do not yield fixed-parameter tractability 
(Results C--H), even when many of the parameters involved are 
restricted to very small constants (see Table \ref{TabFPRes}). That
being said, we do have some initial fixed-parameter tractability results (Results I and J).
Taken together, these sets of results have the following 
immediate implications:

\begin{itemize}
\item Our results fully characterize the fixed-parameter complexity of problems 
       SCre-Spec, SCre-CompA, SRec-Spec, and SRec-CompA relative to the parameters in
       Table \ref{TabPrm}. It is interesting that the only parameter considered
       that matters with respect to fixed-parameter tractability
       is the environment which can be sensed by the software system ($|I|$ (Result I)).
\item Our results partially characterize the fixed-parameter complexities of problems 
       SCre-Comp and SRec-Comp. If fp-tractability holds in at least some of those 
       uncharacterized cases, this would constitute proof that creating and 
       reconfiguring software systems by component selection is tractable in a wider 
       set of circumstances than (and hence may be preferable to) creating and 
       reconfiguring software systems by either {\em de novo} design or component 
       selection with adaptation. 
\item Result E establishes that SCre-CompA is not fp-tractable relative to
       either parameter $|S|$ (which measures the amount of component code that can be 
       modified) or parameter $c_c$ (which measures the number of code modifications).
       This suggests that explanations of good performance by human programmers in
       component selection by whitebox adaptation which are based on small values of
       these parameters \cite[Page 537]{MMM95} are incomplete, and that
       other parameters must also be involved and of small value (if humans are invoking 
       fp-tractable algorithms). 
       This is worth resolving, both for suggesting how human programmers can do CBD
       more productively as well as characterizing which parameters considered here are
       and are not of small value with respect to real-world CBD development.
\item Results F and H show that SRec-Spec and SRec-CompA
       are not fixed-parameter tractable even when there is only a single
       new system requirement, i.e., $|R_{new}| = 1$. That it only takes one new
       requirement to cause intractability for these problems and that this cannot be 
       mitigated by invoking fixed-parameter tractability relative to 
       $|R_{new}|$ is unsettling. The situation, however, is still open with respect
       to this parameter for SRec-Comp, and may yet yield another circumstance 
       under which software system reconfiguration by component selection is preferable
       to software system reconfiguration by either {\em de novo} design or
       component selection with adaptation.
\end{itemize}

\noindent
Depending on questions of interest, other implications may also be gleaned from our
results. In any case, observe that many of these implications are only possibly by 
virtue of using explicit and detailed formalizations of software systems, system 
requirements, components, and component adaptation, and hence could not have been 
inferred from previous complexity analyses establishing the general intractability of 
software system creation by both component selection \cite{PO99,PWM03} and component selection 
with adaptation \cite{BBR05}. This demonstrates that, as promised at the end of Section
\ref{SectFormIssues}, our analyses (and in particular, our parameterized analyses) do 
indeed build on and extend the results derived in those previous analyses.

It is important to note that the brute-force search algorithms underlying all of
our tractability results are not immediately usable in real-world software engineering 
because (1) the problems solved by these algorithms are simplified versions of 
real-world problems and (2) the running times of these algorithms are (to be blunt, 
ludicrously) exorbitant. The first difficulty is a consequence of our complexity 
analysis methodology described in Section \ref{SectIntro} and cannot be avoided.
This is not so bad, as algorithms for simplified problems may still be useful guides in
developing algorithms for more complex and realistic versions of those problems. The 
second difficulty is also not as bad as it initially seems because it is well known within
the parameterized complexity research community that once fp-tractability is 
proven relative to a parameter-set, surprisingly effective fp-algorithms can often be 
subsequently developed \cite{CF+15, DF13}. This may involve
adding additional parameters to ``minimal'' parameter-sets that imply the greatest
possible number of fp-tractability results. Such algorithms typically have runtimes
with greatly diminished non-polynomial terms and polynomial terms that are are additive
rather than multiplicative and linear in the input size. 

The cautionary note above highlights a more general caveat --- namely, that the 
analyses in this paper are only intended to sketch out relative to which sets of 
parameters efficient algorithms may and may not exist for our problems of 
interest and are not intended to be of immediate use in real-world CBD. Indeed, given
the successes in applying CBD over the last several decades in various real-world
settings, one may even argue that our results are irrelevant --- if we have methods
that work in practice, who needs theory? We believe such a view is short-sighted at
best and dangerous at worst. Not knowing the precise conditions under which existing
CBD methods work well may have serious consequences, e.g., drastically slowed 
software creation time and/or unreliable software operation, if these conditions are 
violated. These consequences would be particularly damaging in the case of fully 
automatic applications like ubiquitous and autonomic computing. Given that reliable 
software operation is crucial and efficient software creation and reconfiguration is at
the very least desirable, the acquisition of such knowledge via a combination of 
rigorous empirical and theoretical analyses should be a priority. With respect to 
theoretical analyses, it is our hope that the techniques and results in this 
paper comprise a useful first step.

\section{Conclusions}

\label{SectConc}

We have presented a formal characterization of the problems of automatic software 
system creation
and reconfiguration by {\em de novo} design, component selection, and component 
selection with adaptation. Our complexity analyses reveal that, while all of these 
approaches are computationally intractable in general relative to both deterministic 
and probabilistic algorithms, there are restrictions that render each of these approaches
tractable. The results of these analyses give the first rigorous computational comparison
between the three approaches to software design considered here. In particular, these 
results establish that all three approaches are equally computationally difficult in 
general but suggest that software system creation and reconfiguration by component 
selection may be tractable under more circumstances than software system creation and 
reconfiguration relative to the other two approaches.

There are several promising directions for future research:

\begin{enumerate}
\item Complete the fixed-parameter analyses initiated in this paper
       relative to the parameters considered here.
\item Define and analyze CBD-related problems that incorporate aspects
       of CBD that were abstracted away in the problems considered here, e.g.,
       component search and component-set verification (Section \ref{SectFormProb}).
\item Define and analyze CBD-related problems that are not special cases of those
       considered here, e.g., CBD by blackbox component
       adaptation using adaptors \cite{MA04,Kel08}.
\item Consider the computational complexity of CBD relative to additional types of 
       tractability, e.g., polynomial-time approximation \cite{AC+99} and 
       fixed-parameter approximation \cite{Mar08}, probabilistic  
       \cite{Kwi15}, and evolutionary \cite{KN13} algorithms.
\end{enumerate}
 
\noindent
It is our hope that this research will, in time, be of use in both deriving the best 
possible implementations of fully automatic component-based software system creation 
and reconfiguration methods for real-world CBD and inspiring additional
parameterized complexity analyses of other activities in software engineering, e.g.,
automatic software system (re)modularization (see \cite{War15} and references).

\section*{Acknowledgments}
The authors extend most grateful thanks to Iris van Rooij and Maria
Otworowska, with whom they collaborated on the formalization 
of the adaptive toolbox theory of human decision-making that underlies
the formalizations of component-based software system creation and 
reconfiguration analyzed here. We also want to thank Antonina Kolokolova and 
Johan Kwisthout for useful discussions on Rice's Theorem and efficient 
probabilistic solvability, respectively.
Funding for this work was provided by a National Science and Engineering
Research Council (NSERC) Discovery Grant to TW (grant \# 228105-2015).

% Generated by IEEEtran.bst, version: 1.13 (2008/09/30)

\appendix

\section{Proofs of Results}

\label{SectProof}

All of our intractability results will be derived using polynomial-time and 
parameterized reductions\footnote{
Given two problems $\Pi$ and $\Pi'$, a polynomial-time reduction from $\Pi$
to $\Pi'$ is essentially a polynomial-time algorithm for transforming instances
of $\Pi$ into instances of $\Pi'$ such that any polynomial-time algorithm for $\Pi'$ 
can be used in conjunction with this instance transformation algorithm to create
a polynomial-time algorithm for $\Pi$. Analogously, a
parameterized reduction from $\langle K \rangle$-$\Pi$ to $\langle K'\rangle$-$\Pi'$
allows the instance transformation algorithm to run in fp-time relative $K$ and
requires that for each $k' \in K'$ that there is a function $g_{k'}()$ such that
$k' = g_{k'}(K)$. Such an instance transformation algorithm can be used in conjunction
with any fp-algorithm for $\langle K'\rangle$-$\Pi'$ to create an fp-algorithm for
$\langle K\rangle$-$\Pi$. The interested reader is referred to \cite{DF13,GJ79}
for details.
}
from the following problem:

\vspace*{0.1in}

\noindent
{\sc Dominating Set} \\
{\em Input}: An undirected graph $G = (V, E)$ and an integer $k$. \\
{\em Question}: Does $G$ contain a dominating set of size $\leq k$, i.e., is
              there a subset $V' \subseteq V$, $|V'| \leq k$, such that for
	      all $v \in V$, either $v \in V'$ or there is a $v' \in V'$ such
	      that $(v, v') \in E$?

\vspace*{0.1in}

\noindent 
This problem is $NP$-hard in general \cite[Problem GT2]{GJ79} and $W[2]$-hard relative 
to parameter-set $\{k\}$ \cite{DF13}.
For each vertex $v \in V$, let the complete neighbourhood $N_C(v)$ of $v$ be the
set composed of $v$ and the set of all vertices in $G$ that are adjacent to $v$ 
by a single edge, i.e., $v \cup \{ u ~ | ~ u ~ \in V ~ \rm{and} ~ (u,v) \in E\}$.
We assume below an arbitrary ordering
on the vertices of $V$ such that $V = \{v_1, v_2, \ldots, v_{|V|}\}$.

For technical reasons, all intractability results are proved relative to decision
versions of problems, i.e., problems whose solutions are either ``yes'' or ``no''.
Though none of our problems defined in Section 2.3 of the main text
are decision problems, each can be made into a decision problem by asking if that 
problem's requested output exists; let that decision version for a problem {\bf X}
be denoted by {\bf X}${}_D$. The following easily-proven lemma will be useful
below in transferring results from decision problems to their associated
non-decision problems; it follows from the observation that
any algorithm for non-decision problem {\bf X} can be used to solve {\bf X}${}_D$.

\begin{lemma}
If {\bf X}$_D$ is $NP$-hard then {\bf X} is not solvable in polynomial time
unless $P = NP$.
\label{LemProp1}
\end{lemma}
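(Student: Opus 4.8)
The plan is to argue by contraposition: I will show that a polynomial-time algorithm for the non-decision problem {\bf X} would yield a polynomial-time algorithm for its decision version {\bf X}$_D$, which together with the assumed $NP$-hardness of {\bf X}$_D$ forces $P = NP$.

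First I would recall the precise relationship between {\bf X} and {\bf X}$_D$ set up just before the lemma: {\bf X}$_D$ asks whether the output requested by {\bf X} exists, while {\bf X} itself is specified to return either such an output or the special symbol $\bot$, the latter exactly when no valid output exists. So suppose {\bf X} is solvable in polynomial time by some algorithm $A$. On an arbitrary instance $I$ I would run $A(I)$ and answer ``yes'' if $A(I) \neq \bot$ and ``no'' if $A(I) = \bot$. Since $A$ is correct for {\bf X}, it emits $\bot$ precisely when the requested output does not exist, so this procedure correctly decides {\bf X}$_D$; and since $A$ runs in polynomial time and the single test against $\bot$ adds only negligible overhead, the whole procedure runs in polynomial time, giving {\bf X}$_D \in P$.

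Finally, because {\bf X}$_D$ is $NP$-hard, {\bf X}$_D \in P$ implies that every problem in $NP$ reduces in polynomial time to a polynomial-time-solvable problem, i.e., $P = NP$. Taking the contrapositive yields the statement: unless $P = NP$, {\bf X} is not solvable in polynomial time.

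I expect no real obstacle here; the one point meriting a sentence of care is that the derived decision procedure only inspects whether the object returned by $A$ equals $\bot$ and therefore needs no separate verification of a returned solution — though even if such verification were needed, checking that a candidate software system is consistent with its structure specification and satisfies its requirements is polynomial-time (as observed in Section \ref{SectFormProb}), so the reduction would go through in any case.
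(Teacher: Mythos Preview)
Your proposal is correct and follows essentially the same approach as the paper, which simply notes that any algorithm for the non-decision problem {\bf X} can be used to solve {\bf X}$_D$. You have spelled out the details (running $A$ and testing against $\bot$, then invoking $NP$-hardness) that the paper leaves implicit, but the underlying idea is identical.
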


\begin{lemma}
{\sc Dominating Set} polynomial-time many-one reduces to SCre- \linebreak Spec$_D$ such that in
the constructed instance of SCre-Spec$_D$, $|A| = d = 2$, $|sel| = 0$,  and $|prc|$ and
$|S|$ are functions of $k$ in the given instance of {\sc Dominating Set}.
\label{LemRedSCreSpec}
\end{lemma}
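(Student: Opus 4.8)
The goal is a polynomial-time many-one reduction from {\sc Dominating Set} to SCre-Spec$_D$ in which the created instance uses only a constant-size action set and component count ($|A| = d = 2$, $|sel| = 0$) while $|prc|$ and $|S|$ depend only on the parameter $k$. Since $|sel| = 0$ forces the default selector, the software system $S$ we are asking about is really just a single procedure: an \texttt{IF-THEN-ELSE} block of at most $|prc|$ situation-variable conditions (each condition a literal over $I$) terminating in an \texttt{ELSE}, whose branches execute actions from $A = \{a_1, a_2\}$. So the reduction must encode ``$G$ has a dominating set of size $\le k$'' as ``there is a procedure with $\le |prc|$ literal-conditions, over the variable set $I$, that produces the right action on every requirement in $R$.'' The plan is to let $I$ have one Boolean variable $x_v$ per vertex $v \in V$ (so $|I| = |V|$), think of a truth assignment to $I$ as ``the set of vertices whose variable is $\mathit{True}$,'' and design $R$ so that the only procedures satisfying $R$ are those whose conditions are exactly $k$ positive literals $x_{v_{i_1}}, \ldots, x_{v_{i_k}}$ forming a dominating set, with all \texttt{THEN}-branches executing (say) $a_1$ and the final \texttt{ELSE} executing $a_2$.

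First I would set $|prc| = k$ and $|S| = (|sel|+1)(|prc|+2) = k+2$, so that both are functions of $k$ as required, and $d = 2$ (the selector-type plus the one procedure-type), $|A| = 2$. Next I would build two groups of requirements. The \textbf{domination-forcing requirements}: for each vertex $v \in V$, include a requirement whose situation sets exactly the variables in the complete neighbourhood $N_C(v)$ to $\mathit{True}$ and all other variables to $\mathit{False}$ (or, more robustly, a situation designed so that the procedure outputs $a_1$ on it iff at least one variable of $N_C(v)$ is tested-positive), demanding output $a_1$; intuitively this forces the chosen literal-set to hit $N_C(v)$, i.e.\ to dominate $v$. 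The \textbf{form-forcing requirements}: additional situation-action pairs that force every condition to be a \emph{positive} literal (not a negation), force every \texttt{THEN}-branch to execute $a_1$, force the \texttt{ELSE} to execute $a_2$, and force the number of conditions to be large enough to be meaningful — e.g.\ the all-\mathit{False} situation must output $a_2$, and for each vertex $v$ the situation setting only $x_v$ to $\mathit{True}$ must output $a_1$. Because $|prc| = k$, a procedure satisfying all of this can test at most $k$ variables; the form-forcing requirements pin down the branch actions and literal polarities, and the domination-forcing requirements then say precisely that the (at most $k$) tested variables dominate $G$. I would then prove both directions of correctness: a size-$\le k$ dominating set $V'$ yields the procedure ``\texttt{if} $x_{v}$ \texttt{then} $a_1$'' for each $v \in V'$ followed by ``\texttt{else} $a_2$'' (padding with redundant conditions if $|V'| < k$), which satisfies every requirement; conversely any procedure consistent with $X$ and satisfying $R$ has, by the form-forcing requirements, the shape above, and its $\le k$ positive-literal variables form a dominating set by the domination-forcing requirements.

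The main obstacle I expect is the interaction between the \emph{ordered} nature of an \texttt{IF-THEN-ELSE} block and the \emph{unordered} nature of a dominating set: a procedure evaluates conditions top-to-bottom and executes the first matching branch, so a naive encoding could let an ``early'' condition mask the effect of a later one, or let a procedure cheat by using negative literals or by putting $a_2$ in a \texttt{THEN}-branch. Getting the form-forcing requirement gadget exactly right — so that on \emph{every} assignment the procedure behaves as a pure disjunction ``output $a_1$ iff some tested (positive) literal is true, else $a_2$,'' regardless of the order of its conditions — is the delicate part; in particular I would need enough requirements (polynomially many, one per vertex plus a constant number of structural ones) to rule out negated conditions and misplaced actions without blowing up $|A|$ beyond $2$. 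I would verify that the total construction is polynomial-time computable (it is: $|R| = O(|V|)$ requirements each of length $O(|V|)$) and record that the parameter bounds $|A| = d = 2$, $|sel| = 0$, and $|prc|, |S| = \Theta(k)$ hold, and that the reduction preserves yes-instances in both directions, which together establish the lemma. \myQED
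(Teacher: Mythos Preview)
Your core construction coincides with the paper's: one Boolean variable per vertex, $|sel|=0$, $|prc|=k$, $|A|=2$; for each $v\in V$ a requirement whose situation sets exactly the variables in $N_C(v)$ to \emph{True} with target action $a_1$; and the all-\emph{False} situation with target action $a_2$. The forward direction you sketch is exactly the paper's.

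Where you diverge is in the backward direction. You try to add further ``form-forcing'' requirements so that any satisfying procedure is literally a disjunction of positive literals. The concrete example you give --- ``for each vertex $v$, the situation with only $x_v$ set to \emph{True} must output $a_1$'' --- actually breaks the reduction: a procedure testing $k$ positive literals and outputting $a_2$ in the \texttt{ELSE} branch returns $a_2$ on the singleton situation for every vertex outside the chosen $k$-set, so whenever $k<|V|$ the constructed instance becomes a No-instance regardless of whether $G$ has a size-$k$ dominating set. More generally, with only two actions and $|prc|=k$ it is not clear any polynomial family of extra requirements can pin down the procedure's shape without simultaneously over-constraining the instance; you correctly flag this as ``the delicate part'' but do not resolve it.

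The paper sidesteps this entirely. It uses only the $|V|+1$ requirements above (no extra gadgets) and, in the backward direction, does \emph{not} force the procedure's form. Instead it argues by cases on the first negated condition appearing in the procedure. If there is none, the all-\emph{False} requirement forces the \texttt{ELSE} action to be $a_2$, so the $\le k$ positive literals must cover every neighbourhood requirement and hence dominate $G$. If a negated literal $\neg i_x$ first appears after a prefix $C$ of positive literals, then every neighbourhood requirement not yet handled by $C$ must have $i_x=\mathit{True}$ --- otherwise the branch at $\neg i_x$ would mis-process either that requirement (if its action is $a_2$) or the all-\emph{False} requirement (if its action is $a_1$) --- and so $C\cup\{x\}$ is a dominating set of size at most $k$. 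Thus the obstacle you identified (ordered branches, negated literals, misplaced actions) is handled in the correctness argument rather than by augmenting $R$. Drop the singleton form-forcing requirements and replace your backward plan with this case analysis.
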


\begin{proof}
Given an instance $\langle G = (V, E), k\rangle$ of {\sc Dominating 
set}, the constructed instance $\langle I, A, R, X = 
\langle |sel|,|prc|\rangle \rangle$ of SCre-Spec$_D$ has $I = \{i_1, i_2,
\ldots i_{|V|}\}$, i.e., there is a unique situation-variable corresponding
to each vertex in $V$, $A = \{0, 1\}$, $|sel| = 0$, and $|prc| = k$. 
As $|sel| = 0$ and for this problem $d = |sel| + 2$, $d = 2$. There
are $|V| + 1$ situation-action pairs in $R$ such that (1) for $r_i = 
(s_i, a_i)$, $1 \leq i \leq |V|$, $v(i_j) = True$ if $v_j \in N_C(v_i)$
and is $False$ otherwise and $a_i = 1$, and (2) for $r_{|V|+1} =
(s_{|V|+1}, a_{|V|+1})$, all $v(i_j)$ are $False$ and $a_{|V|+1} = 0$.
Note that the instance of SCre-Spec$_D$ described above can be constructed in 
time polynomial in the size of the given instance of {\sc Dominating Set}. 

If there is a dominating set $D$ of size at most $k$ in the given instance of
{\sc Dominating Set}, construct a software system consisting of a 
default \texttt{*}-condition selector and a single
procedure in which the $k$ \texttt{IF-THEN} statements have 
situation-variable conditions corresponding to the vertices in $D$
and action $1$ and the final 
\texttt{ELSE} statement has action $0$. Observe that this
software system satisfies all situation-action pairs in $R$ and has
$|sel| = 0$ and $|prc| \leq k$. 

Conversely, suppose that the constructed instance 
of SCre-Spec$_D$ has a software system satisfying $R$ with $|sel| = 0$  and
$|prc| \leq k$. The selector in this system must be the default selector
as it is the only selector with $|sel| = 0$. As for the
single procedure attached to that selector, it has two possible
structures:

\begin{enumerate}
\item{
\textit{No negated situation-variable conditions}: As $r_{|V|+1}$ has
no situation- \linebreak variables set to $True$, it can only be processed correctly
by the final \texttt{ELSE}-statement, which must thus execute 
action $0$. In order to accept all other situation-action pairs in $R$,
the remaining $\leq k$ {\tt IF-THEN} statements must both have associated executed action 1 and
situation-variables whose corresponding vertices form a dominating set in $G$
of size at most $k$.
}
\item{
\textit{Negated situation-variables are present}: 
Let $c$ be the first
negated situation-variable condition encountered moving down the
code in the procedure, $C$ be the set of unnegated situation-variable
conditions encountered before $c$, and $R' \subseteq R$ be the set of
situation-action pairs in $R$ that are not recognized by the situation-variables in $C$.
As $|prc| \leq k$, $|C| \leq k - 1$. As
situation-action pair $r_{|V|+1}$ has no 
situation-variable with value $True$ and hence cannot be recognized
by an \texttt{IF-THEN} statement with an unnegated situation-variable
condition, $r_{|V|+1} \in R'$. Moreover, as each situation-action pair
in $R - R'$ has associated action 1, all actions executed by
the {\tt IF-THEN} statements associated with the variables in $C$ must be 1.

Let $R'' \subset R$ be
such that $R' = R'' \cup \{r_{|V|+1}\}$. If $R''$ is empty, then the 
variables in $C$ must have recognized all situation-action pairs in
$R$ except $r_{|V|+1}$, and hence the vertices corresponding to the
situation-variables in $C$ form a dominating set in $G$ of size at
most $k - 1$. If $R''$ is not empty, the situation-variable in $c$ cannot 
have value $False$ for any of the situation-action pairs in $R''$
because having either $0$ or $1$ as the action executed by the 
associated \texttt{IF-THEN} statement would result in the procedure executing 
the wrong action for at least one situation-action pair in $R'$ (if $0$, at least
one of the situation-action pairs in $R''$; if $1$, $r_{|V|+1}$). However,
this implies that all situation-action pairs in $R''$ have value $True$
for the situation- variable in $c$, which in turn implies that the vertices 
corresponding to the situation-variables in $C \cup \{c\}$ form a dominating 
set in $G$ of size at most $k$.
}
\end{enumerate}

\noindent
Hence, the existence of a satisfying software system for the
constructed instance of SCre-Spec$_D$ implies the existence of a
dominating set of size at most $k$ for the given instance  of
{\sc Dominating Set}.

To complete the proof, note that in the constructed instance of \linebreak 
SCre-Spec$_D$, $|A| = d = 2$, $|sel| = 0$, $|prc| = k$, and $|S| = (|sel| + 1)(|prc| + 2) =
k + 2$.
\end{proof}

\begin{lemma}
{\sc Dominating Set} polynomial-time many-one reduces to \linebreak SCre-CompA$_D$ 
such that in the constructed instance of SCre-CompA$_D$, $|A| = d = 2$, $|sel| = 0$, 
$|L_{sel}| = L_{prc}| = 1$, and $|prc|$, $|S|$, and $c_c$ are functions of $k$ in 
the given instance of {\sc Dominating Set}.
\label{LemRedSCreCompA} 
\end{lemma}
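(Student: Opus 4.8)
The plan is to recycle the reduction of Lemma \ref{LemRedSCreSpec} almost verbatim, the only new ingredient being that the single procedure used there is now handed to the solver as the lone (adaptable) member of a procedure-library, with the code-change budget $c_c$ doing the bookkeeping that the structural bound $|prc| \le k$ did before. Given an instance $\langle G = (V,E), k\rangle$ of {\sc Dominating Set}, I would construct the SCre-CompA$_D$ instance with exactly the same $I = \{i_1,\ldots,i_{|V|}\}$, $A = \{0,1\}$, and set $R$ of $|V|+1$ situation-action pairs as in Lemma \ref{LemRedSCreSpec}. I then take $L_{sel}$ to be the singleton consisting of the default \texttt{*}-selector, and $L_{prc}$ the singleton consisting of a procedure $p_0$ that has exactly $k$ \texttt{IF-THEN} statements, each with (say) condition $i_1$ and executed action $1$, followed by a final \texttt{ELSE} executing action $0$; finally set $d = 2$ and $c_c = k$. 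The construction is plainly polynomial-time, and one reads off $|A| = d = 2$, $|sel| = 0$, $|L_{sel}| = |L_{prc}| = 1$, $|prc| = k$, $|S| = (|sel|+1)(|prc|+2) = k+2$, and $c_c = k$, all as claimed.

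For the forward direction, given a dominating set $D$ of size $\le k$, I would first pad $D$ up to size exactly $k$ (a superset of a dominating set is still dominating) and then obtain a procedure from $p_0$ by rewriting the conditions of its $k$ \texttt{IF-THEN} statements to the $k$ situation-variables indexing the chosen vertices --- at most $k = c_c$ code changes, with the $k$ executed actions and the final \texttt{ELSE} left untouched. The resulting software system is precisely the witness system built in the forward direction of Lemma \ref{LemRedSCreSpec}: it consists of the default selector calling this one procedure, hence uses exactly $d = 2$ component-types, stays within the budget $c_c$, and satisfies $R$ by the argument already given there.

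For the converse, observe that in any software system for the constructed instance the selector must be the unadapted default selector: it is the only member of $L_{sel}$, and since code adaptation can neither insert nor delete statements, rewriting its \texttt{*} to a situation-variable would leave a selector carrying a situation-variable condition but lacking the obligatory terminating \texttt{ELSE}, which is not well-formed. Likewise, code adaptation leaves the procedure with exactly $k$ \texttt{IF-THEN} statements, so the system is a default selector driving a procedure with $|prc| \le k$ that satisfies $R$ --- exactly the configuration analysed in the converse direction of Lemma \ref{LemRedSCreSpec}, which then delivers a dominating set of size $\le k$ in $G$. Together with the forward direction this shows the given {\sc Dominating Set} instance is a yes-instance iff the constructed SCre-CompA$_D$ instance is, and the parameter values recorded above finish the proof.

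The substantive content is thus inherited from Lemma \ref{LemRedSCreSpec}; what needs care here is the adaptation bookkeeping, namely that code adaptation cannot change the number of statements in a component (so the procedure is frozen at $k$ \texttt{IF-THEN} statements and the default selector cannot be legally turned into anything with a situation-variable condition), and that $p_0$ is set up so that only condition-rewrites, and at most $k$ of them, are needed to realise the witness. I do not anticipate any real difficulty beyond checking these points.
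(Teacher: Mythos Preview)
Your proposal is correct and follows essentially the same approach as the paper: both reuse the $I$, $A$, $R$ of Lemma~\ref{LemRedSCreSpec}, place the default selector and a single $k$-condition procedure (with unnegated conditions and actions $1$, final \texttt{ELSE} action $0$) into singleton libraries, set $d=2$ and $c_c=k$, and then defer the converse analysis to the case split already carried out in Lemma~\ref{LemRedSCreSpec}. Your version is arguably slightly more careful than the paper's in that you explicitly argue why the default selector cannot be meaningfully adapted, whereas the paper's converse simply asserts that modifications are confined to the procedure; the remaining differences (using $i_1$ for all initial conditions versus arbitrary unnegated variables, padding $D$ versus repeating members of $D$) are cosmetic.
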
 

\begin{proof}
Given an instance $\langle G = (V, E),$ $k\rangle$ of {\sc Dominating Set}, 
the constructed instance $\langle I, A, R,$ $L_{sel},$ $L_{prc}, d, c_c \rangle$
of SCre-CompA$_D$ has $I$, $A$, and $R$ as in the reduction in Lemma 
\ref{LemRedSCreSpec} above. Let $L_{sel}$ consist of the default selector and 
$L_{prc}$ consist of a single procedure whose $k$ \texttt{IF-THEN} statements 
have conditions that are arbitrarily-selected unnegated situation-variables in $I$ and 
execute action $1$ and whose \texttt{ELSE} statement executes action $0$. 
Finally, set $d = 2$ and $c_c = k$.
Note that the instance of SCre-CompA$_D$ described above can be constructed in 
time polynomial in the size of the given instance of {\sc Dominating Set}. 

If there is a dominating set $D$ of size at most $k$ in the given instance of
{\sc Dominating Set}, construct the software system $S'$ consisting of the single 
selector and procedure given in libraries $L_{sel}$ and $L_{prc}$,
respectively, and create $S$ from $S'$ by modifying the first $|D|$ \texttt{IF-THEN} 
statements in the procedure to have situation-variable conditions corresponding to 
the vertices in $D$ and modifying the remaining $k - |D|$ \texttt{IF-THEN} statements
in the procedure to have situation-variable conditions corresponding to an arbitrary
subset of $D$.  Observe that $S$ satisfies all situation-action pairs in $R$, has
$|sel| = 0$ and $|prc| = k$, and is obtained by at most $k$ modifications to the
system code of $S'$.

Conversely, suppose that the constructed instance 
of SCre-CompA$_D$ has a software system $S$ satisfying $R$ that was obtained by at
most $c_c$ modifications to the system code of a software system $S'$ consisting
of at most $d$ components drawn from $L_{sel}$ and $L_{prc}$. There is exactly
one such $S'$ consisting of the single selector and procedure in libraries
$L_{sel}$ and $L_{prc}$, respectively. The modifications made to the procedure
in $S'$ to create $S$ result in one of the two structures described in the proof of 
correctness of the reduction in Lemma \ref{LemRedSCreSpec} above, both of which (as 
shown in that same proof) allow the derivation of dominating sets in $G$ of size at 
most $k$.

To complete the proof, note that in the constructed instance of \linebreak
SCre-CompA$_D$, $|A| = d = 2$, $|sel| = 0$, $|L_{sel}| = |L_{prc}| = 1$, 
$|prc| = c_c = k$, and $|S| = (|sel| + 1)(|prc| + 2) = k + 2$.
\end{proof}

\vspace*{-0.1in}

\begin{lemma}
{\sc Dominating Set} polynomial-time many-one reduces to SCre- \linebreak Comp$_D$ 
such that in the constructed instance of SCre-Comp$_D$, $|A| = 2$, $|prc| = |L_{sel}| = 1$,
and $d$ is a function of $k$ in the given instance of {\sc Dominating Set}.
\label{LemRedSCreComp} 
\end{lemma}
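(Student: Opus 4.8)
The plan is to reduce {\sc Dominating Set} to SCre-Comp$_D$ in the spirit of Lemmas~\ref{LemRedSCreSpec}--\ref{LemRedSCreCompA}, but with the parameter $d$ now playing the role that $|prc|$ played there: since $|prc| = 1$ forces every procedure to test a single literal, a dominating set of size $k$ cannot be packed into one $k$-condition procedure as before but must instead be spread across $k$ distinct single-condition procedures placed in distinct slots of one large, rigid selector. Controlling which requirement reaches which slot is done with a set of dedicated ``routing'' situation-variables.

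Concretely, given $\la G = (V,E), k\ra$ with $V = \{v_1,\ldots,v_{|V|}\}$, I would construct an instance of SCre-Comp$_D$ with $A = \{0,1\}$ and situation-variable set $I = \{x_1,\ldots,x_{|V|}\} \cup \{i_1,\ldots,i_{|V|}\}$; a procedure-library $L_{prc} = \{p_u : u \in V\}$, where $p_u$ is the procedure \texttt{if}~$i_u$~\texttt{then}~$1$~\texttt{else}~$0$; a selector-library $L_{sel} = \{s\}$ containing the single selector whose $j$-th condition is $x_j$ and calls the $j$-th procedure-slot (for $j = 1,\ldots,|V|$), with a final \texttt{else} calling slot $|V|+1$; a requirement set $R$ containing, for each $v_i$, the pair $r_i = (s_i, 1)$ whose situation sets $x_i = True$, all other $x_j = False$, and $i_j = True$ iff $v_j \in N_C(v_i)$, together with the pair $r_0$ whose situation is all-$False$ with action $0$; and $d = k+1$. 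This is plainly polynomial-time constructible, and in the constructed instance $|A| = 2$, $|prc| = 1$, $|L_{sel}| = 1$, and $d = k+1$ is a function of $k$, as claimed. (Because $d$ is a function of $k$ while $|A|$, $|prc|$, and $|L_{sel}|$ are fixed constants, this doubles as a parameterized reduction, so the $W[2]$-hardness of {\sc Dominating Set} relative to $\{k\}$ transfers and yields Result~D.)

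For correctness, the key observations are that running $s_i$ through $s$ reaches exactly slot~$i$ while $r_0$ reaches slot~$|V|+1$, that $p_u$ on $s_i$ returns $1$ iff $u \in N_C(v_i)$, and that $p_u$ on the all-$False$ situation returns $0$. So from a dominating set $D$ with $|D| \le k$ I would build the system using $s$, placing in slot~$i$ some $p_u$ with $u \in D \cap N_C(v_i)$ (nonempty since $D$ dominates $v_i$) and reusing one of these in slot~$|V|+1$; this satisfies $R$ and uses at most $|D|+1 \le k+1 = d$ component-types. Conversely, any system satisfying $R$ must use the unique selector $s$, and for each $i$ its slot~$i$ must hold some $p_{u_i}$ with $u_i \in N_C(v_i)$ (the library offers no constant-action or negated-literal procedure that could make slot~$i$ output $1$ on $s_i$ in any other way); hence $D = \{u_1,\ldots,u_{|V|}\}$ dominates $G$, and since $S$ contains a distinct procedure $p_u$ for each $u \in D$ while using at most $d$ component-types in total, $|D| \le d - 1 = k$.

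The step I expect to be the main obstacle is the converse direction: one must argue carefully that the solver cannot circumvent introducing a fresh procedure-type per chosen dominating vertex. This rests on three design choices that need to be checked: $L_{sel}$ is a singleton so the selector is forced and all $|V|+1$ slots must be filled; $L_{prc}$ contains only the single-positive-literal threshold procedures, so no requirement can be satisfied ``for free'' by a constant or negated-literal procedure; and each $r_i$ is routed to its own dedicated slot, so the slot-to-procedure assignment cannot be amortized across requirements. The only bookkeeping wrinkle is the \texttt{else}-slot, which requirement $r_0$ keeps meaningful but which can reuse an already-counted procedure --- this is exactly why $d = k+1$ rather than $k$, mirroring the ``$d = |sel| + 2$'' accounting of Lemma~\ref{LemRedSCreSpec}.
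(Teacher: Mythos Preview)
Your construction is correct and essentially the same as the paper's: the paper likewise uses $2|V|$ situation-variables split into ``routing'' variables (for the selector conditions) and ``neighbourhood'' variables (for the single-condition procedures $p_u$), a single fixed selector, the same procedure library, the same all-$False$ requirement with action $0$, and $d = k+1$. The only cosmetic difference is that the paper's selector has $|V|-1$ conditions rather than $|V|$, so its \texttt{ELSE} slot simultaneously handles $r_{|V|}$ and the all-$False$ requirement, whereas your extra slot separates them; this changes nothing in the argument or the claimed parameter bounds.
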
 

\begin{proof}
Given an instance $\langle G = (V, E), k\rangle$ of {\sc Dominating 
set}, the constructed instance $\langle I, A, R, L_{sel}, L_{prc},
d\rangle$ of SCre-Comp$_D$ has $I = \{i_1, i_2,
\ldots i_{2|V|}\}$ and $A = \{0, 1\}$. There
are $|V| + 1$ situation-action pairs in $R$ such that 
(1) for $r_i = 
(s_i, a_i)$, $1 \leq i \leq |V|$, $v(i_i) = True$, $v(i_{|V| + j}) = True$ 
if $v_j \in N_C(v_i)$, all other $v(i_j)$ are $False$,
and $a_i = 1$, and (2) for $r_{|V|+1} =
(s_{|V|+1}, a_{|V|+1})$, all $v(i_j) = False$ and $a_{|V|+1} = 0$.
There is a single selector in $L_{sel}$ whose $|V| - 1$ \texttt{IF-THEN}
statement conditions are the first $|V| - 1$ situation-variables in $I$.
There are $|V|$ procedures in $L_{prc}$ such that procedure $j$,
$1 \leq j \leq |V|$, has an \texttt{IF-THEN} statement 
whose condition is situation-variable $i_{|V| + j}$
and which executes action $1$ and an \texttt{ELSE} statement that
executes action $0$. As such, each procedure $j$ in $L_{prc}$ effectively 
corresponds to and encodes vertex $v_j$. Finally, let $d = k + 1$. 
Note that the instance of SCre-Comp$_D$ described above can be constructed in 
time polynomial in the size of the given instance of {\sc Dominating Set}. 

If there is a dominating set $D$ of size at most $k$ in the given instance
of {\sc Dominating Set}, construct a software system $S$ consisting of the 
single selector in $L_{sel}$ and the set $X$ consisting of the procedures 
in $L_{prc}$ corresponding to the vertices in $D$. For the $i$th \texttt{IF-THEN} 
statement in the selector, $1 \leq i \leq |V| - 1$, execute any of the procedures 
in $X$ encoding a vertex in $D$ that dominates $v_i$. For the 
\texttt{ELSE} statement in the selector, execute any of the procedures in $X$
encoding a vertex in $D$ that dominates vertex $v_{|V|}$. Observe that 
$S$ satisfies all situation-action pairs in $R$ (with both $r_{|V|}$ and
$r_{|V|+1}$ being satisfied by the procedure associated with the 
\texttt{ELSE} statement in the selector) and has $d \leq k + 1$. 

Conversely, suppose that the constructed instance 
SCre-Comp$_D$ has a software system $S$ constructed from at most $d = k + 1$
distinct components from $L_{sel}$ and $L_{prc}$ that satisfies all
of the situation-action pairs in $R$. As the selector is constructed 
such that each of the first $|V| - 1$ situation-action pairs in $R$ has
its own associated procedure and each procedure accepts a
specific vertex in $G$, in order for these situation-action 
pairs to be satisfied, the distinct procedures in $S$
must correspond to a dominating set for $G$ (note that both $r_{|V|}$ and
$r_{|V|+1}$ are satisfied by the procedure associated with the 
\texttt{ELSE} statement in the selector). As $d \leq k + 1$ and $S$
had to incorporate a selector from $L_{sel}$, 
there are at most $k$ such procedures and hence there is a dominating set
of size at most $k$ in the given instance of {\sc Dominating Set}.

To complete the proof, note that in the constructed instance of \linebreak
SCre-Comp$_D$, $|A| = 2$, $|prc| = |L_{sel}| = 1$, and $d = k + 1$.
\end{proof}

\vspace*{0.06in}

\begin{lemma}
{\sc Dominating Set} polynomial-time many-one reduces to SRec- \linebreak Spec$_D$ such that in
the constructed instance of SRec-Spec$_D$, $|A| = 3$, $|sel| = 0$, $d = 2$, $|R_{new}| = 1$, and 
$|prc|$, $|S|$, and $c_c$ are functions of $k$ in the given instance of {\sc Dominating Set}.
\label{LemRedSRecSpec}
\end{lemma}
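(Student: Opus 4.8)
The plan is to adapt the polynomial-time reduction from {\sc Dominating Set} already used in Lemma~\ref{LemRedSCreSpec}, splitting the requirement set so that exactly one situation-action pair is ``new''. Given an instance $\langle G = (V,E), k\rangle$, I would reuse the entities of Lemma~\ref{LemRedSCreSpec} (one situation-variable $i_j$ per vertex $v_j$, an action set containing $0$ and $1$, $|sel| = 0$, $|prc| = k$, the $|V|$ ``vertex'' pairs $(s_i, 1)$ with $v(i_j) = True$ iff $v_j \in N_C(v_i)$, and the all-$False$ pair $(s^*, 0)$), but now declare $R$ to consist only of the $|V|$ vertex pairs and let $R_{new} = \{(s^*, 0)\}$, so that $|R_{new}| = 1$. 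As the starting system $S$ I would take the default selector together with a procedure of $k$ \texttt{IF-THEN} statements whose conditions are arbitrary unnegated situation-variables and which all execute action $1$, with the final \texttt{ELSE} also executing action $1$; this $S$ is consistent with $X = \langle I, A, 0, k\rangle$ and satisfies $R$, since it outputs $1$ everywhere. A third action is available in $A$ and may be used, e.g., as the initial \texttt{ELSE} action or in a dummy ``guard'' pair added to $R$, to realize $|A| = 3$ exactly without affecting the argument. Finally set $c_c = k + 1$. Everything is clearly computable in polynomial time.

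For the forward direction, given a dominating set $D$ of $G$ with $|D| \le k$: pad $D$ (by repeating members) to a list of length exactly $k$, rewrite the $k$ \texttt{IF-THEN} conditions of $S$'s procedure to the situation-variables of that list, and change the \texttt{ELSE} action from $1$ to $0$. By the ``no negated conditions'' argument in the proof of Lemma~\ref{LemRedSCreSpec}, the resulting $S'$ satisfies $R \cup R_{new}$, is still consistent with $X$ (default selector, $|prc| = k$), and was obtained from $S$ by at most $k$ condition changes plus one action change, i.e., at most $c_c$ code changes. For the backward direction, suppose $S'$ is obtained from $S$ by at most $c_c$ code changes, is consistent with $X$, and satisfies $R \cup R_{new}$. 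Since $X$ forces $|sel| = 0$, $S'$ retains the default selector, and its procedure has at most $k$ situation-variable conditions and must satisfy all $|V| + 1$ pairs of $R \cup R_{new}$ --- which is exactly the hypothesis of the converse direction of Lemma~\ref{LemRedSCreSpec}. Its two-case analysis (all conditions unnegated versus some condition negated) then yields a dominating set of $G$ of size at most $k$. The stated parameter values follow by inspection: $|A| = 3$, $|sel| = 0$, $d = |sel| + 2 = 2$, $|R_{new}| = 1$, $|prc| = k$, $|S| = (|sel| + 1)(|prc| + 2) = k + 2$, and $c_c = k + 1$.

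I expect the only delicate point to be the backward direction, specifically ruling out a ``shortcut'' reconfigured system $S'$ that satisfies $R \cup R_{new}$ within the $c_c$ budget but does not encode a dominating set. This is handled by observing that the structural constraints $X$ imposes on $S'$ --- a default selector and a procedure of at most $k$ situation-variable conditions --- are precisely those under which Lemma~\ref{LemRedSCreSpec}'s case analysis already forces the dominating-set structure, regardless of how $S'$ was reached from $S$; the budget $c_c$ is needed only to certify that the forward-direction witness is reachable, for which $k + 1$ changes demonstrably suffice. A secondary check is that $R \cup R_{new}$ is a well-defined partial function (the situations $s_1, \ldots, s_{|V|}, s^*$ are pairwise distinct, since each $s_i$ has $v(i_i) = True$) and that $S$, while satisfying $R$, does not already satisfy $R_{new}$, so the constructed instance is non-degenerate --- both of which are immediate.
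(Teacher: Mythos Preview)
Your approach is correct and takes a simpler route than the paper. The paper introduces an extra situation-variable $i_{|V|+1}$, sets it to $True$ in every vertex pair, keeps the all-$False$ pair in $R$, and lets $R_{new}$ consist of the single pair with only $i_{|V|+1}$ set to $True$ and action $2$; the initial procedure has $k+1$ conditions, the last being $i_{|V|+1}$ (so that $S$ already satisfies all of $R$, including the all-$False$ pair, via the final \texttt{ELSE}). Because the new requirement genuinely uses the third action, the paper's backward direction must track how $r$, $r_{|V|+1}$, and the vertex pairs separate down the procedure, leading to a noticeably longer case analysis (three outer cases, one of which splits into three further subcases).

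Your construction instead keeps $R$ as just the $|V|$ vertex pairs (so the all-$1$ system trivially satisfies it) and moves the all-$False$ pair into $R_{new}$. This makes $R \cup R_{new}$ coincide exactly with the requirement set of Lemma~\ref{LemRedSCreSpec}, so that lemma's two-case backward analysis applies verbatim to any $S'$ consistent with $X$, and the third action becomes cosmetic padding to reach $|A| = 3$. The payoff is a shorter proof and slightly tighter parameters ($|prc| = k$, $|S| = k+2$ versus the paper's $k+1$ and $k+3$); in fact your argument really establishes the stronger statement with $|A| = 2$. The one point worth making explicit is that enlarging $A$ by an unused action does not disturb Lemma~\ref{LemRedSCreSpec}'s backward analysis: any \texttt{IF-THEN} executing that unused action would mis-process every pair it catches (since every pair in $R \cup R_{new}$ has action $0$ or $1$), so the contradictions in both cases still go through.
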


\begin{proof}
Given an instance $\langle G = (V, E), k\rangle$ of {\sc Dominating 
set}, the constructed instance $\langle I, A, R, S, X = \langle |sel|,
|prc|\rangle, R_{new}, c_c \rangle$ of SRec-Spec$_D$ has $I =$ \linebreak $ \{i_1, i_2,
\ldots i_{|V|+1}\}$, $A = \{0, 1, 2\}$, $|sel| = 0$, and $|prc| = k + 1$. 
As $|sel| = 0$ and for this problem $d = |sel| + 2$, $d = 2$. There
are $|V| + 1$ situation-action pairs in $R$ such that (1) for $r_i = 
(s_i, a_i)$, $1 \leq i \leq |V|$, $v(i_j) = True$ if $v_j \in N_C(v_i)$,
$v(i_{|V|+1}) = True$, all remaining $v(i_j)$ are $False$, and $a_i = 1$, 
and (2) for $r_{|V|+1} = (s_{|V|+1}, a_{|V|+1})$, all $v(i_j)$ are $False$ and 
$a_{|V|+i} = 0$. Software system $S$ consists of the default selector and a single 
procedure based on $k + 1$ conditions, in which the first $k$ \texttt{IF-THEN} 
statements have arbitrary conditions drawn from the first $|V|$ situation-variables 
in $I$ and execute action $1$, the $(k + 1)st$ \texttt{IF-THEN} statement has
condition $i_{|V|+1}$ and executes action $1$, and the \texttt{ELSE} statement 
executes action $0$. Note that $S$ satisfies $R$. Set $R_{new}$ consists of a single
situation-action pair $r = (s,a)$ with $s$ such that $v(i_{|V|+1}) = True$, 
all remaining $v(i_j)$ are $False$, and $a = 2$. Finally, let $c_c = k + 1$.
Note that the instance of SRec-Spec$_D$ described above can be constructed in 
time polynomial in the size of the given instance of {\sc Dominating Set}. 

If there is a dominating set $D$ of size at most $k$ in the given instance of
{\sc Dominating Set}, construct a software system $S'$ consisting of a 
default selector and a single procedure modified from $S$ in which
the last \texttt{IF-THEN} statement executes action $2$ when $i_{|V|+1} = True$
and the conditions of the remaining $k$ \texttt{IF-THEN} statements are the 
situation-variables corresponding to the vertices in $D$ (with, if $|D| < k$, the
conditions of the final $k - |D|$ \texttt{IF-THEN} statements corresponding to
an arbitrary subset of size $k - |D|$ from $D$). Observe that $S'$ 
satisfies $R \cup R_{new}$, has $|sel| = 0$ and $|prc| \leq k + 1$, and can be
obtained by at most $k + 1$ changes to the system code of $S$.

Conversely, suppose that 
the constructed instance of SRec-Spec$_D$ has a software system $S'$ satisfying 
$R \cup R_{new}$ with $|sel| = 0$ and $|prc| \leq k + 1$ that can be created from
$S$ by at most $k + 1$ code changes. The selector in this system must be the 
default selector (as it is the only selector with $|sel| = 0$). Hence, all code
changes are confined to the procedure. This procedure has two possible
structures:

\begin{enumerate}
\item{
\textit{No negated situation-variable conditions}: As $r_{|V|+1}$ has
no situation-vari-ables set to $True$, it can only be processed correctly
by the final \texttt{ELSE}-statement, which must thus execute 
associated action $0$. As $r$ has no situation-variable set to $True$ except
$i_{|V|+1}$, it can only be processed correctly by an \texttt{IF-THEN}
statement with condition $i_{|V|+1}$ that executes action $2$. In order to 
correctly process all remaining situation-action pairs in $R$,
the remaining $\leq k$ {\tt IF-THEN} statements must both have associated executed action 1 and
situation-variables whose corresponding vertices form a dominating set in $G$
of size at most $k$.
}
\item{
\textit{Negated situation-variables are present}: 
Let $c$ be the first
negated situation-variable condition encountered moving down the
code in the procedure, $C$ be the set of unnegated situation-variable
conditions encountered before $c$, and $R' \subseteq R \cup \{r\}$ be the set of
situation-action pairs in $R \cup \{r\}$ that are not recognized by the 
situation-variables in $C$.
As $|prc| \leq k + 1$, $|C| \leq k$. Moreover, as
situation-action pair $r_{|V|+1}$ has no 
situation-variable with value $True$ and hence cannot be recognized
by an \texttt{IF-THEN} statement with an unnegated situation-variable
condition, $r_{|V|+1} \in R'$. 

Let $R'' \subset R \cup \{r\}$ be
such that $R' = R'' \cup \{r_{|V|+1}\}$. There are three possibilities:

\begin{enumerate}
\item{
\textit{$r \not\in R''$}: In this case, $i_{|V|+1}$
must be in $C$ in order for $r$ to be recognized prior to $c$. However, this
implies that all situation-action pairs in $R - \{r_{|V|+1}\}$ must have been
recognized prior to the \texttt{IF-THEN} statement with condition $i_{|V|+1}$
(otherwise, the action 2 required to correctly recognize $r$ would have caused
the wrong action to be output for at least one of these situation-action 
pairs). This would mean that the vertices corresponding to the
situation-variables in $C - \{i_{|V|+1}\}$ form a dominating set in $G$ of size at
most $k - 1$ in $G$. 
}
\item{
\textit{$r \in R''$ and $|R''| = 1$, i.e., $R'' = \{r\}$}: 
The variables in $C$ must have recognized all situation-action pairs in
$R$ except $r_{|V|+1}$, and hence the vertices corresponding to the
situation-variables in $C$ form a dominating set in $G$ of size at
most $k$ in $G$. 
}
\item{
\textit{$r \in R''$ and $|R''| > 1$}: Observe that the situation-variable in
$c$ cannot be in $I - \{i_{|V|+1}\}$
as the action executed by the 
associated \texttt{IF-THEN} statement would result in the procedure executing 
the wrong action for at least one situation-action pair in $R'$ (if $0$, at least
one of the situation-action pairs in $R''$; if $1$, $r_{|V|+1}$ and $r$; if $2$,
at least one of the situation-action pairs in $R' - \{r\}$). Hence, the only
situation-action variable that can safely occur negated in $c$ at this 
point is $i_{|V|+1}$, whose negation recognizes only $r_{|V|+1}$.

Consider now the processing of $R'' - \{r_{|V|+1}\}$ in the procedure after the
\texttt{IF-THEN} statement with condition $c$. As all remaining unrecognized
situation-action pairs have $i_{|V|+1} = True$ and cannot be recognized by
condition $\neg i_{|V|+1}$, we will assume that no
statement with condition $\neg i_{|V|+1}$ is encountered. There are three possible
cases: 

\begin{enumerate}
\item{
{\em An \texttt{IF-THEN} statement with condition $i_{|V|+1}$ is encountered}:
Let $C'$ be the set of unnegated situation-variable conditions encountered 
between $c$ and condition $i_{|V|+1}$ and $R''' \subseteq R'' - \{r_{|V|+1}\}$ be 
the set of situation-action pairs remaining to be processed as of condition 
$i_{|V|+1}$. As $i_{|V|+1} \not\in C'$, $r \in R'''$. Moreover, as $|prc| = k + 1$, 
$|C| + |C'| \leq k - 1$.

As $R'''$ does not contain $r_{|V|+1}$, $i_{|V|+1}$ is $True$ for all 
situation-action pairs in $R'''$. We also know that $S'$ satisfies $R$.
This implies that $R'''$ only has one member because
if this is not so, the action executed by the \texttt{IF-THEN} statement
with $i_{|V|+1}$ as its condition cannot be $1$ or $2$ 
without processing at least one member of $R'''$ incorrectly.
However, this implies that the vertices 
corresponding to the situation-variables in $C \cup C'$ form a 
dominating set in $G$ of size at most $k - 1$.
}
\item{
{\em Another \texttt{IF-THEN} statement with a negated situation-variable
condition $c'$ is encountered}: Let $x$ be the situation-variable negated in $c'$,
$C'$ be the set of unnegated situation-variable conditions encountered between
$c$ and $c'$, and $R''' \subseteq R'' - \{r_{|V|+1}\}$ be the set of situation-action
pairs remaining to be processed as of condition $c'$. We may assume that 
$i_{|V|+1} \not\in C'$ as this case is covered in (i) above; hence, $r \in R'''$. 
Moreover, as $|prc| \leq k + 1$, $|C| + |C'| \leq k - 1$.

The situation-variable $x$ cannot have value $False$ for any of the situation-action 
pairs in $R''' - \{r\}$ because having either $1$ or $2$ as the action executed by 
the associated \texttt{IF-THEN} statement would result in the procedure executing 
the wrong action for at least one situation-action pair in $R'''$ (if $1$, $r$; 
if $2$, at least one of the situation-action pairs in $R''' - \{r\}$). This 
implies that all situation-action pairs in $R''' - \{r\}$ have value $True$ for $x$,
which in turn implies that the vertices corresponding to the situation-variables in 
$C \cup C' \cup \{x\}$ form a dominating set in $G$ of size at most $k$.
}
\item{
{\em Neither (i) nor (ii) occurs before the final \texttt{ELSE} statement}:
Let $C'$ be the set of unnegated situation-variable conditions 
encountered between $c$ and the final \texttt{ELSE} statement and $R''' 
\subseteq R''$ be the set of situation-action pairs remaining to be processed 
as of the final \texttt{ELSE} statement.  As neither (i) nor (ii) occurred, $r$ 
cannot have been recognized before the final \texttt{ELSE} statement; hence, 
$r \in R'''$. Moreover, as $|prc| = k + 1$, $|C| + |C'| \leq k - 1$.

As $S'$ satisfies $R \cup \{r\}$, in order to process $r$ correctly, the final
\texttt{ELSE} statement must execute action $2$. However, this implies
that $r$ is the only member of $R'''$ because if this is not so, any
other member of $R'''$ (which would require the execution of action $1$)
would be processed incorrectly. However, this implies that the vertices 
corresponding to the situation-variables in $C \cup C'$ form a 
dominating set in $G$ of size at most $k$.
}
\end{enumerate}
}
\end{enumerate}
}
\end{enumerate}

\noindent
Hence, the existence of a satisfying software system for the
constructed instance of SRec-Spec$_D$ implies the existence of a
dominating set of size at most $k$ for the given instance  of
{\sc Dominating Set}.

To complete the proof, note that in the constructed instance of \linebreak 
SCre-Spec$_D$, $|A| = 3$, $|sel| = 0$, $d = 2$, $|R_{new}| = 1$, $|prc| = c_c = k + 1$,
and $|S| = (|sel| + 1)(|prc| + 2) = k + 3$.
\end{proof}

\vspace*{0.06in}

\begin{lemma}
{\sc Dominating Set} polynomial-time many-one reduces to \linebreak SRec-CompA$_D$ 
such that in the constructed instance of SRec-CompA$_D$, $|A| = 3$, $d = 2$, $|L_{sel}| = 
|L_{prc}| = |R_{new}| = 1$, $|sel| = c_l = 0$, and $|prc|$,  $|S|$, and $c_c$ are 
functions of $k$ in the given instance of {\sc Dominating Set}.
\label{LemRedSRecCompA} 
\end{lemma}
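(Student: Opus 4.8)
The plan is to combine the reconfiguration construction from the proof of Lemma~\ref{LemRedSRecSpec} with the library-packaging device used in the proof of Lemma~\ref{LemRedSCreCompA}. Given an instance $\langle G = (V,E), k\rangle$ of {\sc Dominating Set}, I would build the situation-variable set $I = \{i_1, \ldots, i_{|V|+1}\}$, the action set $A = \{0,1,2\}$, the requirement set $R$, and the new-requirement set $R_{new} = \{r\}$ exactly as in the proof of Lemma~\ref{LemRedSRecSpec}. For the libraries, I would let $L_{sel}$ contain only the default selector and let $L_{prc}$ contain only the single $(k+1)$-condition procedure that serves as the starting procedure in that earlier proof (first $k$ \texttt{IF-THEN} statements with arbitrary conditions drawn from $i_1,\ldots,i_{|V|}$ executing action $1$, the $(k+1)$st \texttt{IF-THEN} statement with condition $i_{|V|+1}$ executing action $1$, and the \texttt{ELSE} statement executing action $0$). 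The input system $S$ is then the unique system buildable from these libraries, namely the default selector calling that one procedure, and $S$ satisfies $R$ by construction. Finally I would set $d = 2$, $c_l = 0$, and $c_c = k+1$. All of this is clearly polynomial-time computable.

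For the forward direction, given a dominating set $D$ of size at most $k$, I would construct $S'$ exactly as in the proof of Lemma~\ref{LemRedSRecSpec}: keep the default selector, and modify the procedure so that the last \texttt{IF-THEN} statement outputs action $2$ on condition $i_{|V|+1}$ and the remaining $k$ \texttt{IF-THEN} conditions are the situation-variables corresponding to a $k$-element padding of $D$. Then $S'$ is obtained from $S$ using $0$ component changes and at most $k+1$ code changes, uses at most $d = 2$ component types, has $|sel| = 0$ and $|prc| \le k+1$, and satisfies $R \cup R_{new}$.

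For the reverse direction, the key observation is that $c_l = 0$ forces any candidate $S'$ to be built from exactly the components appearing in $S$ --- the default selector (the only element of $L_{sel}$) and the single procedure of $L_{prc}$ --- so every one of the at most $c_c = k+1$ permitted changes is a code change applied to that one procedure. This is precisely the situation dissected in the reverse direction of the proof of Lemma~\ref{LemRedSRecSpec}, and I would invoke that case analysis (no negated conditions present versus negated conditions present, with its three sub-cases on the placement of $r$) essentially verbatim to extract a dominating set of $G$ of size at most $k$. To finish, I would record that in the constructed instance $|A| = 3$, $d = 2$, $|L_{sel}| = |L_{prc}| = |R_{new}| = 1$, $|sel| = c_l = 0$, $|prc| = c_c = k+1$, and $|S| = (|sel|+1)(|prc|+2) = k+3$.

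I do not expect a substantive obstacle: the construction is an almost mechanical merge of two earlier reductions, and the delicate structural argument has already been carried out in the proof of Lemma~\ref{LemRedSRecSpec}. The only point requiring care is confirming that setting $c_l = 0$ genuinely pins down both the selector and the procedure of any candidate $S'$, so that the reconfiguration-with-adaptation problem collapses to the code-change-only analysis already done; once that is noted, correctness transfers directly.
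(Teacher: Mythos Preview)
Your proposal is correct and matches the paper's own proof essentially line for line: the paper likewise takes $I$, $A$, $R$, $S$, and $R_{new}$ from Lemma~\ref{LemRedSRecSpec}, packages the single selector and procedure of $S$ as the sole members of $L_{sel}$ and $L_{prc}$, sets $d = 2$, $c_l = 0$, $c_c = k+1$, and then observes that with $c_l = 0$ the instance collapses to the SRec-Spec setting so that the correctness argument of Lemma~\ref{LemRedSRecSpec} applies verbatim.
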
 

\begin{proof}
Given an instance $\langle G = (V, E), k\rangle$ of {\sc Dominating Set}, 
the constructed instance $\langle I, A, R, S, L_{sel}, 
L_{prc}, d, R_{new}, c_c, c_l \rangle$ of SRec-CompA$_D$ has $I$, $A$, $R$, $S$, and 
$R_{new}$ as in the reduction in Lemma \ref{LemRedSRecSpec}, $L_{sel}$ and $L_{sel}$
consist only of the selector and procedure, respectively, in the given software 
system $S$ in the reduction in Lemma \ref{LemRedSRecSpec}, $d = 2$, $c_l = 0$, and 
$c_c = k + 1$. Note that the instance of SRec-Spec described above can be constructed
in time polynomial in the size of the given instance of {\sc Dominating Set}. As no 
component changes are possible, this constructed instance of SRec-CompA$_D$ is for
all intensive purposes an instance of SRec-Spec, and the proof of correctness of 
this reduction is identical to that given for the reduction in Lemma 
\ref{LemRedSRecSpec}. To complete the proof, note that in the constructed instance 
of SCre-CompA$_D$, $|A| = 3$, $|sel| = 0$, $d = 2$ $|R_{new}| = 1$,  $|L_{sel}| = 
|L_{prc}| = 0$, $|prc| = c_c = k + 1$, and $|S| = (|sel| + 1)(|prc| + 2) = k + 3$.
\end{proof}

\vspace*{0.06in}

\begin{lemma}
{\sc Dominating Set} polynomial-time many-one reduces to \linebreak SRec-Comp$_D$ such 
that in the constructed instance of SRec-Comp$_D$, $|A| = |L_{sel}| = 2$, $|prc| = 1$, 
and $d$ is a function of $k$ in the given instance of {\sc Dominating Set}.
\label{LemRedSRecComp} 
\end{lemma}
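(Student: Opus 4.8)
The plan is to reduce from {\sc Dominating Set} by wrapping the reduction of Lemma~\ref{LemRedSCreComp} inside a reconfiguration instance whose initial system satisfies a trivial set of old requirements but none of the new ones (cf.\ Lemmas~\ref{LemRedSRecSpec} and~\ref{LemRedSRecCompA}), so that producing a legal $S'$ amounts to solving the embedded creation instance from scratch. Given $\langle G=(V,E),k\rangle$, I keep from Lemma~\ref{LemRedSCreComp} the action set $A=\{0,1\}$, the two blocks of situation-variables $i_1,\dots,i_{2|V|}$, the ``vertex'' procedures $p_1,\dots,p_{|V|}$ (procedure $p_j$ tests $i_{|V|+j}$, executing $1$ if it holds and $0$ otherwise, so that $p_j$ returns $1$ on $r_i$ exactly when $v_j\in N_C(v_i)$ and returns $0$ on the all-\textit{False} pair), and the evaluation selector $s_{eval}$ with the $|V|-1$ \texttt{IF-THEN} conditions $i_1,\dots,i_{|V|-1}$; I add one fresh situation-variable $i_{2|V|+1}$ that is \textit{False} in every requirement. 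The library $L_{sel}$ consists of $s_{eval}$ together with a second selector $s_{init}$ that also has $|V|-1$ \texttt{IF-THEN} statements (so swapping one for the other is legal under the selector-adaptation rule) but whose conditions are all $i_{2|V|+1}$, so that a system built on $s_{init}$ sends every requirement to its \texttt{ELSE} branch; $L_{prc}$ consists of $p_1,\dots,p_{|V|}$ and one single-action procedure $p_0$ executing $0$. The $|V|+1$ pairs of Lemma~\ref{LemRedSCreComp} are split into the old set $R=\{r_{|V|+1}\}$ and $R_{new}=\{r_1,\dots,r_{|V|}\}$; the initial system $S$ is $s_{init}$ with $p_0$ in every slot, so $S$ is constantly $0$ and hence satisfies $R$ but no pair of $R_{new}$; finally $d=k+1$ and $c_l=|V|+1$. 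This is plainly polynomial-time, and in the produced instance $|A|=|L_{sel}|=2$, $|prc|=1$, $d=k+1$, with $|I|$, $|L_{prc}|$, $|R_{new}|$ and $c_l$ all functions of $|V|$.

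Correctness mirrors the analysis of Lemma~\ref{LemRedSCreComp}. If $G$ has a dominating set $D$ with $|D|\le k$, reconfigure $S$ by replacing $s_{init}$ with $s_{eval}$ and then, for $1\le i\le|V|-1$, putting in branch $i$ a procedure encoding a member of $D$ dominating $v_i$ and, in the \texttt{ELSE} branch, a procedure encoding a member of $D$ dominating $v_{|V|}$; this uses at most $|V|+1=c_l$ component changes, has at most $|D|+1\le d$ component types, and (with $r_{|V|}$ and $r_{|V|+1}$ both landing on the \texttt{ELSE} branch) satisfies $R\cup R_{new}$ by the SCre-Comp argument. Conversely, any $S'$ reachable from $S$ that satisfies $R\cup R_{new}$ with at most $d$ component types uses either $s_{eval}$ --- in which case the counting in Lemma~\ref{LemRedSCreComp} forces the at most $d-1=k$ distinct procedures of $S'$ to encode a dominating set of $G$ --- or $s_{init}$ --- in which case every requirement reaches the single \texttt{ELSE} procedure, which can then only be some $p_j$ returning $1$ on all of $r_1,\dots,r_{|V|}$, i.e.\ with $\{v_j\}$ a dominating set of size $1\le k$. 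Hence the constructed instance is a ``yes''-instance of SRec-Comp$_D$ if and only if $G$ has a dominating set of size at most $k$.

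The step I expect to be the real work is the reconfiguration bookkeeping rather than any new combinatorial insight. I must check that $S$ is legitimately assembled from $L_{sel}$ and $L_{prc}$ and genuinely satisfies the old requirements $R$; that the restricted selector-swap rule is respected, which is exactly why $s_{init}$ and $s_{eval}$ are forced to have the same number ($|V|-1$) of \texttt{IF-THEN} branches with procedure slots remapped positionally; and that keeping $s_{init}$ in $S'$ cannot produce a spurious ``yes'', which is why $s_{init}$'s conditions all test a variable that is \textit{False} everywhere, so such an $S'$ corresponds only to a (trivially small) dominating set. With those points settled, both directions and the parameter counts follow essentially verbatim from the proof of Lemma~\ref{LemRedSCreComp}, and together with Lemma~\ref{LemProp1} this reduction also shows that SRec-Comp is not polynomial-time solvable unless $P=NP$.
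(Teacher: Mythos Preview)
Your proposal is correct and follows essentially the same strategy as the paper: wrap the SCre-Comp reduction of Lemma~\ref{LemRedSCreComp} inside a reconfiguration instance by adding one extra situation-variable, a second selector, and an initial system that trivially satisfies $R$, so that any valid $S'$ either swaps to the evaluation selector (forcing the $d$-bound argument of Lemma~\ref{LemRedSCreComp}) or keeps the initial selector and funnels everything to a single procedure encoding a size-one dominating set. The encodings differ only in cosmetic details --- the paper makes the extra variable $True$ everywhere and uses it as the first condition of each selector, takes $|R|=|V|$ rather than $|R|=1$, and uses a constant-$1$ rather than constant-$0$ dummy procedure --- but the correctness argument and all claimed parameter values ($|A|=|L_{sel}|=2$, $|prc|=1$, $d=k+1$) are the same.
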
 

\begin{proof}
Given an instance $\langle G = (V, E), k\rangle$ of {\sc Dominating Set}, the 
constructed instance $\langle I, A, R, S, R_{new}, L_{sel}, L_{prc}, c_l, d\rangle$ of 
SCre-Comp$_D$ has $I = \{i_1, i_2, \ldots$ \linebreak $i_{2|V| + 1}\}$ and 
$A = \{0, 1\}$. There are $|V|$ situation-action pairs in $R$ such that 
for $r_i = (s_i, a_i)$, $1 \leq i \leq |V|$, $v(i_i) = True$, $v(i_{|V| + j}) 
= True$ if $v_j \in N_C(v_i)$, $v(i_{2|V|+1}) = True$, all other 
$v(i_j)$ are $False$, and $a_i = 1$.
System $S$ consists of a selector whose $|V|$ \texttt{IF-THEN}
statement conditions are $i_{2|V|+1}$ followed by the first $|V| - 1$ 
situation-variables in $I$ and whose $|V| + 1$ associated procedures
(progressing along the selector) are all a procedure consisting
of an \texttt{IF-THEN} statement with condition $i_{2|V|+1}$ and action
$1$ followed by an \texttt{ELSE} statement with action 1.
Observe that $S$ satisfies $R$ (albeit trivially, as the condition
of the first \texttt{IF-THEN} statement in the selector always evaluates to 
$True$ and hence recognizes all situation-action pairs in $R$). Library $L_{sel}$ 
consists of the union of the selector in $S$ described above and the
selector whose $|V|$ \texttt{IF-THEN}
statement conditions are $\neg i_{2|V|+1}$ followed by the first $|V| - 1$ 
situation-variables in $I$. Library $L_{prc}$ consists of the union of the
procedure in $S$ described above and $L_{prc}$ as described in the reduction given
in Lemma \ref{LemRedSCreComp}. Let $R_{new}$ consist of $|V|$ situation-action
pairs in which $r_i = (s_i, a_i)$, $1 \leq i \leq |V|$, $v(i_i) = v(i_{2|V|+1})
= True$, all other $v(i_j)$ are $False$, and $a_i = 0$.  Finally, 
let $c_l = |V| + 1$ and $d = k + 1$. Note that the instance of SRec-Comp$_D$ described 
above can be constructed in time polynomial in the size of the given instance of 
{\sc Dominating Set}. 

If there is a dominating set $D$ of size at most $k$ in the given instance
of {\sc Dominating Set}, modify $S$ by swapping in the sole selector 
in $L_{sel}$ and assigning procedures to this selector such that each 
\texttt{IF-THEN} statement with condition $i_j$, $1 \leq j \leq |V| - 1$, executes 
one of the procedures in $L_{prc}$ corresponding to a vertex in $D$ that 
dominates $v_j$ and the \texttt{ELSE} statement executes one of the procedures in 
$L_{prc}$ corresponding to a vertex in $D$ that dominates $v_{|V|}$. Observe that 
this software system satisfies all situation-action pairs in $R \cup R_{new}$, has 
$d \leq k + 1$, and can be obtained from $S$ by $c_l$ component changes. 

Conversely, suppose that the constructed instance of SRec-Comp$_D$ has a software 
system $S'$ derived from $S$ by at most $c_l$ changes that has at most $d$ 
component-types and satisfies $R \cup R_{new}$. If there is a dominating set in 
$G$ consisting of a single vertex, system $S$ can be modified to accept $R \cup 
R_{new}$ by replacing the procedure executed by the first \texttt{IF-THEN} statement 
in the selector in $S$ with the procedure in $L_{prc}$ 
corresponding to that dominating vertex. If such a 
dominating vertex does not exist in $G$, as all members of $R \cup R_{new}$
have $v(i_{2|V|+1}) = True$, there is no procedure in $L_{prc}$ that can
replace the procedure executed by the first \texttt{IF-THEN} statement in the
selector in $S$ to process all situation-action pairs
in $R \cup R_{new}$ correctly. Hence, $S'$ must use the selector in $L_{sel}$.
As the condition of the first \texttt{IF-THEN} statement in this new selector
never evaluates to $True$ on any of the situation-action pairs in $R \cup
R_{new}$, all of these pairs must be recognized and processed correctly in
the remainder of the selector and its called procedures. As $d$ is at most
$k + 1$, this means that at most $k$ procedures must be taken from $L_{prc}$
to replace all procedures called following the first \texttt{IF-THEN}
statement in the selector in $S$; this entails a further $|V|$ component changes,
which brings the total required to change $S$ into $S'$ to $|V| + 1$. However, this 
all implies that the vertices encoded in these procedures form a dominating set 
in $G$ of size at most $k$.

To complete the proof, note that in the constructed instance of \linebreak 
SRec-Comp$_D$, $|A| = |L_{sel}| = 2$, $|prc| = 1$, and $d = k + 1$.
\end{proof}

\vspace*{0.06in}

\noindent
{\bf Result A}: If any of SCre-Spec, SCre-Comp, SCre-CompA, SRec-Spec, SRec-Comp, or 
                 SRec-CompA  is polynomial-time tractable then $P = NP$.

\vspace*{0.08in}

\begin{proof}
The $NP$-hardness of the decision versions of these problems follows from the 
$NP$-hardness of {\sc Dominating Set} and the reductions in Lemmas \ref{LemRedSCreSpec},
\ref{LemRedSCreComp}, \ref{LemRedSCreCompA}, \ref{LemRedSRecSpec}, \ref{LemRedSRecComp},
and \ref{LemRedSRecCompA}, respectively. The result then follows from Lemma
\ref{LemProp1}.
\end{proof}

\vspace*{0.1in}

\noindent
{\bf Result B}: If $P = BPP$ and either SCre-Spec, SCre-Comp, SCre-CompA, SRec-Spec, 
                 SRec-Comp, or SRec-CompA is polynomial-time tractable by a 
                 probabilistic algorithm which operates correctly with probability 
                 $\geq 2/3$ then $P = NP$.

\vspace*{0.08in}

\begin{proof}
It is widely believed that $P = BPP$ \cite[Section 5.2]{Wig07} where $BPP$ is 
considered the most inclusive class of decision problems that can be efficiently solved
using probabilistic methods (in particular, methods whose probability of correctness is
$\geq 2/3$ and can thus be efficiently boosted to be arbitrarily close to one). Hence, 
if any of SCre-Spec, SCre-Comp, SCre-CompA, SRec-Spec, SRec-Comp, or SRec-CompA 
has a probabilistic polynomial-time 
algorithm which operates correctly with probability $\geq 2/3$
then by the observation on which Lemma \ref{LemProp1} is based, their corresponding
decision versions also have such algorithms and are by definition in $BPP$. However,
if $BPP = P$ and we know that all these decision versions are $NP$-hard by the
proof of Result A, this would then imply by the definition of $NP$-hardness
that $P = NP$, completing the result.
\end{proof}

\vspace*{0.1in}

\noindent
{\bf Result C}: If $\langle |A|, |sel|, |prc|, |S|, d\rangle$-SCre-Spec is fp-tractable then
                 $FPT = W[1]$.

\vspace*{0.08in}

\begin{proof}
Follows from the $W[2]$-hardness of $\{ k \}$-{\sc Dominating Set}, the inclusion of 
$W[1]$ in $W[2]$, the reduction from {\sc Dominating Set} to SCre-Spec
given in Lemma \ref{LemRedSCreSpec}, and the definition of $FPT$.
\end{proof}

\vspace*{0.1in}

\noindent
{\bf Result D}: If $\langle |A|, |prc|, d, |L_{sel}|\rangle$-SCre-Comp is fp-tractable 
                 then $FPT = W[1]$.

\vspace*{0.08in}

\begin{proof}
Follows from the $W[2]$-hardness of $\{ k \}$-{\sc Dominating Set}, the inclusion of
$W[1]$ in $W[2]$, the reduction from {\sc Dominating Set} to SCre-CompA given in Lemma 
\ref{LemRedSCreComp}, and the definition of $FPT$.
\end{proof}

\vspace*{0.1in}

\noindent
{\bf Result E}: If $\langle |A|, |sel|, |prc|, |S|, d,$ $|L_{sel}|, |L_{prc}|, 
                 c_c \rangle$-SCre-CompA is fp-tractable then $FPT = W[1]$.

\vspace*{0.08in}

\begin{proof}
Follows from the $W[2]$-hardness of $\{ k \}$-{\sc Dominating Set}, the inclusion of
$W[1]$ in $W[2]$, the reduction from {\sc Dominating Set} to SCre-Comp
given in Lemma \ref{LemRedSCreCompA}, and the definition of $FPT$.
\end{proof}

\vspace*{0.1in}

\noindent
{\bf Result F}: If $\langle |A|, |sel|, |prc|, |S|, d, |R_{new}|, c_c\rangle$-SRec-Spec is 
                 fp-tractable then $FPT = W[1]$.

\vspace*{0.08in}

\begin{proof}
Follows from the $W[2]$-hardness of $\{ k \}$-{\sc Dominating Set}, the inclusion of
$W[1]$ in $W[2]$, the reduction from {\sc Dominating Set} to SRec-Spec given in Lemma 
\ref{LemRedSRecSpec}, and the definition of $FPT$.
\end{proof}

\vspace*{0.1in}

\noindent
{\bf Result G}: If $\langle |A|, |prc|, d, |L_{sel}|\rangle$-SRec-Comp is fp-tractable 
                 then $FPT = W[1]$.

\vspace*{0.08in}

\begin{proof}
Follows from the $W[2]$-hardness of $\{ k \}$-{\sc Dominating Set}, the inclusion of
$W[1]$ in $W[2]$, the reduction from {\sc Dominating Set} to SRec-CompA
given in Lemma \ref{LemRedSRecComp}, and the definition of $FPT$.
\end{proof}

\vspace*{0.1in}

\noindent
{\bf Result H}: If $\langle |A|, |sel|, |prc|, |S|, d, |L_{sel}, |L_{prc}|, |R_{new}|, c_l, 
                 c_c \rangle$-SRec-CompA is \newline fp-tractable then $FPT = W[1]$.

\vspace*{0.08in}

\begin{proof}
Follows from the $W[2]$-hardness of $\{ k \}$-{\sc Dominating Set}, the inclusion of
$W[1]$ in $W[2]$, the reduction from {\sc Dominating Set} to SRec-Comp given in Lemma 
\ref{LemRedSRecCompA}, and the definition of $FPT$.
\end{proof}

\vspace*{0.1in}

\noindent
{\bf Result I}: $\langle I\rangle$-SCre-Spec, -SCre-Comp, -SCre-CompA, -SRec-Spec, 
                    -SRec-Comp, and -SRec-CompA are fp-tractable.

\vspace*{0.08in}

\begin{proof}
Let $S(|I|)$ be the set of software systems with the structure considered in this paper
that can be constructed using $|I|$ situation-variables and $|S(|I|)|$ be the number of
such systems. The maximum number of conditions that can be evaluated for any 
situation-action pair is $|I| + 1$ as once any condition based on $x \in I$ occurs, at 
most $|I| - 1$ additional conditions not involving $x$ can occur before condition 
$\neg x$ must occur. Once $\neg x$ occurs, as each $r \in R$ satisfies either $x$ or 
$\neg x$, further conditions are not evaluated. Given that these $|I| + 1$ conditions 
that can occur in any selector or procedure are drawn from $2|I|$ candidates (each $x 
\in I$ and its negation), there are $\left( \begin{array}{c} 2|I| \\ |I| + 1
\end{array} \right) \leq (|2|I|)^{|I| + 1}$ possible selections of
conditions and $(|I| + 1)! \leq (|I| + 1)^{|I| + 1}$ possible orderings
of these conditions. There are thus less than $(2|I|)^{|I| + 1}(|I| + 1)^{|I|+1}$
selectors or procedures with exactly $|I| + 1$ conditions, and less than 
$(|I| + 1)(2|I|)^{|I| + 1}(|I| + 1)^{|I|+1}$ selectors or procedures
with at most $|I| + 1$ conditions. Let us denote the latter quantity with $T$.
Given a selector, there are less than $T^{|I| + 2}(|I|+2)^{|I| + 2}$ 
systems that can be built with that selector, and as there are less than $T$
selectors, $|S(|I|)| \leq T^{|I| + 3}(|I|+2)^{|I|+2}$. Let us denote this
(extraordinarily loose) upper bound by $T'$.

A basic algorithm for each of the problems examined in this paper is to
consider each system $S'$ in $S(|I|)$, determine if it satisfies $R$, and
then check if it satisfies any additional required properties for a solution, i.e., 

\begin{enumerate}
\item at most $d$ component-types occur in $S'$ (SCre-CompA, SCre-Comp,
       SRec-CompA, SRec-Comp);
\item at most $c_l$ component changes were made in transforming $S$ into
       $S'$ (SRec-CompA, SRec-Comp); and
\item at most $c_c$ code changes were made in transforming $S$ into $S'$
       (SCre-CompA, SRec-Spec, SRec-CompA).
\end{enumerate}

\noindent
Test (1) can be done in time polynomial in the problem input size. Test (2) can
be performed for each $(S,S')$ pair by generating every possible
non-repeating chain of intermediate systems linking $S$ and $S'$ of
length at most $c_l$ and checking that each adjacent pair of systems in this
chain can be generated by a one-component change relative to given 
$L_{sel}$ and $L_{prc}$ (both of whose sizes are upper-bounded by $T'$). 
As the longest possible chain of systems has each possible system occurring once,
$c_l \leq T'$. The number of such
chains is at most $c_l(T'^{c_l})c_l^{c_l} \leq T'^{T'^2 + 1}$ and the
one-component check can be done in time polynomial in $|I|$. By a similar
argument we can establish that test (3) can be done in a similar amount
of time. As both $T$ and $T'$ are functions of $|I|$, the running times
of the basic algorithm sketched above is upper bounded by a function of
$|I|$ times some polynomial of the input size for each of the problems
considered here. Hence, each of the problems considered here when
parameterized relative to $|I|$ is in $FPT$.
\end{proof}

\vspace*{0.1in}

\noindent
{\bf Result J}: $\langle |sel|, |L_{prc}|\rangle$-SCre-Comp and -SRec-Comp are 
                    fp-tractable.

\vspace*{0.08in}

\begin{proof}
Let $S(L_{sel}, |sel|, L_{prc})$ be the set of two-level software systems 
that can be constructed from a set $L_{sel}$ of selectors with at most $|sel|$ 
conditions relative to a procedure-library $L_{prc}$ and $|S(L_{sel}, |sel|, L_{prc})|$
be the number of such systems. For any individual selector with at most $|sel|$
conditions, there will be at most $|L_{prc}|^{|sel|}$ possible ordered selections of 
procedures from $L_{prc}$.  Given that there are $|L_{sel}|$ selectors to choose, 
$|S(L_{sel}, |sel|, L_{prc})| \leq |L_{sel}||L_{prc}|^{|sel|}$.
Given this, by an argument slightly modified from that given in the proof of 
Result I, we can establish that the basic algorithms in the proof of Result I 
implementing tests (1) and/or (2) as appropriate for problems SCre-Comp and SRec-Comp
run in time that is upper-bounded by a function of $|sel|$ and $|L_{prc}|$ times some
polynomial of the problem input size. Hence, both of these problems when
parameterized relative to $|sel|$ and $|L_{prc}|$ are in $FPT$.
\end{proof}

\end{document}